\def\diag{\textrm{diag}}
\newtheorem{Proposition}{Proposition}
\newtheorem{Lemma}{Lemma}
\newtheorem{Corollary}{Corollary}
\newtheorem{lemma}[Lemma]{$\mathbf{Lemma}$}
\newtheorem{proposition}[Proposition]{Proposition}
\newtheorem{corollary}[Corollary]{$\mathbf{Corollary}$}
\begin{document}
\title{ {\huge OTFS-NOMA: An Efficient Approach for Exploiting Heterogenous User Mobility Profiles}}

\author{ Zhiguo Ding, Robert Schober,  Pingzhi Fan,     and H. Vincent Poor, \thanks{ 
  
\vspace{-2em}
    Z. Ding and H. V. Poor are  with the Department of
Electrical Engineering, Princeton University, Princeton, NJ 08544,
USA. Z. Ding
 is also  with the School of
Electrical and Electronic Engineering, the University of Manchester, Manchester, UK (email: \href{mailto:zhiguo.ding@manchester.ac.uk}{zhiguo.ding@manchester.ac.uk}, \href{mailto:poor@princeton.edu}{poor@princeton.edu}).
R. Schober is with the Institute for Digital Communications,
Friedrich-Alexander-University Erlangen-Nurnberg (FAU), Germany (email: \href{mailto:robert.schober@fau.de}{robert.schober@fau.de}).
P. Fan is with the Institute of Mobile
Communications, Southwest Jiaotong University, Chengdu, China (email: \href{mailto:pingzhifan@foxmail.com}{pingzhifan@foxmail.com}).

  }\vspace{-4em}}
 \maketitle
\begin{abstract}\vspace{-1em}
This paper considers a challenging communication scenario, in which   users have  heterogenous mobility profiles, e.g., some users are moving at high speeds and some users  are static.  A new non-orthogonal multiple-access (NOMA) transmission protocol that incorporates   orthogonal time frequency space (OTFS) modulation is proposed. Thereby, users with different mobility profiles are grouped together for the implementation of NOMA.  The proposed OTFS-NOMA protocol is shown to be applicable to both uplink and downlink transmission, where sophisticated transmit and receive strategies  are developed to remove inter-symbol interference and  harvest both multi-path and multi-user diversity.   Analytical results  demonstrate that both the high-mobility and low-mobility users benefit from the application of OTFS-NOMA. In particular, the use of NOMA allows  the spreading of  the high-mobility users' signals over a large amount of time-frequency resources, which enhances the OTFS resolution and improves the detection reliability. In addition, OTFS-NOMA ensures that   low-mobility users have access to   bandwidth resources which  in conventional OTFS-orthogonal multiple access (OTFS-NOMA) would be solely occupied by the high-mobility users. Thus, OTFS-NOMA improves the spectral efficiency and reduces latency.  


\end{abstract} \vspace{-2em}
\section{Introduction}
Non-orthogonal multiple access (NOMA) has been recognized as a paradigm  shift for the design of multiple access techniques for the next generation wireless networks \cite{nomama,8010756,7676258}. Many existing works on NOMA have focused on scenarios with low-mobility users, where users with different channel conditions or quality of service (QoS) requirements are grouped together for the implementation of NOMA.   For example, in  power-domain NOMA,   a base station serves two users simultaneously \cite{NOMAPIMRC,Nomading}. In particular, the base station first orders the users according to their channel conditions, where the `weak user' which has a poorer connection to the base station is generally allocated more transmission power and the other user, referred to as the `strong user', is allocated less power. As such, the two users can be served in the same time-frequency resource,  which improves the spectral efficiency compared to orthogonal multiple access (OMA). In the case that users have similar channel conditions, grouping users with different QoS requirements can  facilitate the   implementation  of NOMA and  effectively  exploit  the   potential of NOMA    \cite{Zhiguo_CRconoma, 8541121,8611381}.  Various existing studies have shown that   the NOMA    principle can be applied to different  communication networks, such as millimeter-wave networks \cite{Zhiguo_mmwave,8493528 }, massive multiple-input multiple-output (MIMO) systems \cite{Zhiguo_massive,8598847}, visible light communication networks \cite{7572968,7275086}, and mobile edge computing \cite{MECding}. 
 
This paper considers the application of NOMA to a challenging communication scenario, where users have heterogeneous    mobility profiles. Different from the existing works in \cite{8246842,7976286}, the use of     orthogonal time frequency space (OTFS) modulation is considered in this paper because of its superior performance    in scenarios with doubly-dispersive   channels \cite{OTFS, OTFS2,8503182}.  Recall that the key idea of OTFS is to use the delay-Doppler plane, where users' signals    are orthogonally placed. Compared to conventional modulation schemes, such as orthogonal frequency-division multiplexing  (OFDM), OTFS offers the benefit that the time-invariant channel gains   in the delay-Doppler plane can be utilized, which simplifies channel estimation and signal detection in high-mobility scenarios. The impact of  pulse-shaping waveforms on the performance of OTFS was studied  in \cite{8516353}, and the design of interference cancellation and iterative detection for OTFS was investigated in \cite{8424569}. The diversity achieved by OTFS was  studied in \cite{otfs_diversity}, and the application of OTFS to multiple access was proposed in \cite{8515088}. In  \cite{8422467} and \cite{8647394},  the concept of OTFS was combined with MIMO, which showed that the use of spatial degrees of freedom can further enhance the performance of OTFS.

This paper considers the application of OTFS to NOMA communication networks, where  the coexistence of NOMA and OTFS is investigated. In particular, this paper makes  the following contributions:  
 
1) A spectrally efficient OTFS-NOMA transmission protocol  is proposed by grouping users with different mobility profiles for the implementation of NOMA. {\it On the one hand}, users with high mobility are served in the delay-Doppler plane, and their signals are modulated  by OTFS. {\it On the other hand},  users with low mobility are served in the time-frequency plane, and their signals are modulated in a manner  similar to conventional OFDM.

2) The proposed new OTFS-NOMA protocol is applied to both uplink and downlink transmission, where different rate and power allocation   policies  are used  to suppress   multiple access interference. 
In addition, sophisticated equalization techniques, such as the frequency-domain zero-forcing linear equalizer (FD-LE) and the decision feedback equalizer (FD-DFE), are proposed to remove the   inter-symbol interference  in the delay-Doppler plane. The impact of the developed equalization techniques on the performance of OTFS-NOMA is   analyzed by using the outage probability as the criterion.  Strategies to harvest   multi-path diversity and     multi-user diversity are also  introduced, which can further improve the outage performance of OTFS-NOMA transmission.

3)
The developed analytical results demonstrate that both the high-mobility and the low-mobility users benefit from the proposed    OTFS-NOMA scheme.  The use of NOMA allows     the high-mobility users' signals to be spread over   a large amount of time-frequency resources without degrading the spectral efficiency.  As a result, the OTFS resolution, which determines whether the users' channels can be accurately located in the delay-Doppler plane, is enhanced significantly, and  therefore, the reliability of detecting the high-mobility users' signals is     improved.  We note that, in  OTFS-OMA, enhancing the OTFS resolution  implies  that   a large amount of time and frequency resources are solely occupied by the high-mobility users, which reduces the overall spectral efficiency since the high-mobility users' channel conditions are typically weaker than those of the low-mobility users. 
In contrast, the use of OTFS-NOMA ensures that the low-mobility users can access the bandwidth resources which would be solely occupied by the high-mobility users in the OMA mode. Hence, OTFS-NOMA improves   spectral efficiency and reduces latency. In addition, we note that  for the low-mobility users, using   OFDM  yields the same reception reliability   as  using OTFS, as pointed out in \cite{8599041}. Therefore, the proposed  OTFS-NOMA scheme, which serves the low-mobility users in the time-frequency plane and modulates the low-mobility users' signals in a manner similar to OFDM, offers the same  reception reliability  as OTFS-OMA, which serves the low-mobility users in the delay-Doppler plane and modulates the low-mobility users' signals by OTFS. However, OTFS-NOMA has the benefit of reduced system complexity because the use of the complicated OTFS  transforms is avoided. 
 


\vspace{-1em}
 \section{ Basics of OTFS-NOMA } \label{section 2}

\subsection{Time-Frequency Plane and Delay-Doppler Plane}
The key idea of OTFS-NOMA is to efficiently use both the time-frequency plane and the delay-Doppler plane. A discrete time-frequency plane  is obtained by sampling  at intervals of $T$ s and $\Delta f$ Hz as follows: 
\begin{align}
\Lambda_{\text{TF}} = \{(nT, m\Delta f), n =0, \cdots, N-1, m =0, \cdots, M-1\}, 
\end{align}
and the corresponding discrete delay-Doppler plane  is given by
\begin{align}
\Lambda_{\text{DD}} = \left\{\left(\frac{k}{NT}, \frac{l}{M\Delta f}\right), k =0, \cdots, N-1, l =0, \cdots, M-1\right\}.
\end{align}
The choices for $T$ and $\Delta f$ are determined by the channel characteristics, as explained  in the following subsection.   

\vspace{-1em}

\subsection{Channel Model}
This paper considers a multi-user communication network in which  one base station communicates with $(K+1)$   users, denoted by $\text{U}_i$, $0\leq i\leq K$. 
Denote    $\text{U}_i$'s channel response in the delay-Doppler plane by $ h_i(\tau, \nu)$, where $\tau$ denotes the delay and $\nu$ denotes the Doppler shift. OTFS uses the sparsity feature of a wireless channel in the delay-Doppler plane, i.e., there are a small number of propagation paths between a transmitter and a receiver \cite{OTFS, OTFS2,8424569}, which means that $ h_i(\tau, \nu)$ can be expressed as follows:
\begin{align}\label{dd channel}
 h_i(\tau, \nu) = \sum^{P_i}_{p=0}h_{i,p}\delta(\tau-\tau_{i,p})\delta(\nu-\nu_{i,p}),
\end{align}
where $(P_i+1)$ denotes the number of propagation paths, and $h_{i,p}$, $\tau_{i,p}$, and $\nu_{i,p}$ denote the complex Gaussian channel gain, the delay, and the Doppler shift associated with  the $p$-th propagation path. We assume that the $h_{i,p}$, $0\leq p\leq P_i$, are independent and identically distributed (i.i.d.)  random variables\footnote{In order to simplify the performance analysis, we assume that the users' channels are  i.i.d. In practice, it is likely  that the high-mobility users' channel conditions are worse than the low-mobility users' channel conditions. This channel difference is beneficial for the implementation of NOMA, and hence can further increase the performance gain of OTFS-NOMA over OTFS-OMA.  },  i.e., $h_{i,p}\sim CN\left(0,\frac{1}{P_i+1}\right)$, which means $\sum^{P_i}_{p=0}\mathcal{E}\{|h_{i,p}|^2\}=1$,  where $\mathcal{E}\left\{\cdot\right\}$ denotes the expectation operation.  Hence,    The discrete   delay and Doppler tap indices  for the $p$-th path of $h_{i}(\tau,\nu)$, denoted by $l_{\tau_{i,p}}$ and $k_{\nu_{i,p}}$, are given by
\begin{align}
\tau_{i,p} =\frac{l_{\tau_{i,p}}+\tilde{l}_{\tau_{i,p}}}{M\Delta f}, \quad \nu_{i,p} =\frac{k_{\nu_{i,p}}+\tilde{k}_{\nu_{i,p}}}{NT},
\end{align}
where  $\tilde{l}_{\tau_{i,p}}$ and $\tilde{k}_{\nu_{i,p}}$ denote the fractional delay and the fractional Doppler shift, respectively. 

The construction of $\Lambda_{\text{TF}} $  and $\Lambda_{\text{DD}} $  needs to ensure that  $T$ is not smaller than the maximal delay spread, and $\Delta f$ is not smaller than the largest Doppler shift,  i.e., $T\geq \max\{\tau_{i,p},0\leq p\leq P_i, 0\leq i\leq K\}$ and $\Delta f\geq \max\{\nu_{i,p},0\leq p\leq P_i, 0\leq i\leq K\}$. In addition, the choices of $N$ and $M$ affect   the OTFS resolution, which determines  whether $h_i(\tau, \nu)$   can be accurately located in the discrete delay-Doppler plane. In particular, $M$ and $N$ need to be sufficiently large to approximately achieve   ideal OTFS resolution,  which ensures   that $\tilde{l}_{\tau_{i,p}}=\tilde{k}_{\nu_{i,p}}=0$, such that interference caused by   fractional delay and fractional Doppler shift is effectively suppressed~\cite{OTFS}.  
\vspace{-1em}

\subsection{General Principle of OTFS-NOMA}
The general principle  of the proposed OTFS-NOMA scheme is to utilize both the delay-Doppler plane and the time-frequency plane, where users with heterogenous mobility profiles are grouped together and served simultaneously. {\it On the one hand}, for the users with high mobility, their signals are placed in the delay-Doppler plane, which  means that  the time-invariant  channel gains in the delay-Doppler plane can be exploited.  It is worth pointing out that in order to ensure that the channels  can be located in the delay-Doppler plane, both $N$ and $M$ need to be large, which is a disadvantage of OTFS-OMA, since a significant number of frequency channels (e.g., $M\Delta f$) are   occupied for a long time (e.g., $NT$) by the high-mobility users whose channel conditions can be quite  weak. The use of OTFS-NOMA facilitates  spectrum  sharing and hence ensures that the high-mobility users' signals can be spread over a large amount of time-frequency resources without degrading the spectral efficiency. 

{\it On the other hand}, for the users with low mobility, their signals are placed in the time-frequency plane. The interference between the users with different mobility profiles is managed by using the principle of NOMA. As a result, OTFS-NOMA improves the overall spectral efficiency since it avoids that the bandwidth resources are solely occupied by the high-mobility users as in OTFS-OMA. In addition, the complexity of detecting the low-mobility users' signals is reduced, compared to OTFS-OMA which serves  all users  in the delay-Doppler plane.  


In this paper, we assume that, among the $(K+1)$ users, $\text{U}_0$  is a user with   high mobility, and the remaining $K$ users, $\text{U}_i$ for $1\leq i\leq K$, are low-mobility  users, which are  referred  to as   `NOMA' users\footnote{Alternatively,   multiple high-mobility users can be served in the delay-Doppler plane,  and this change  has no impact on the downlink  results obtained in this paper. For the uplink case, the results developed in the paper are applicable to the case with multiple high-mobility users if  adaptive data rate transmission is employed.   }. For OTFS-OMA,  we assume that $\text{U}_0$ solely occupies all   $NM$ resource blocks in $\Lambda_{\text{DD}}$. In OTFS-NOMA,   $\text{U}_i$, for $1\leq i\leq K$, are opportunistic NOMA users and their  signals are placed in $\Lambda_{\text{TF}}$.   The design  of downlink OTFS-NOMA  transmission will be discussed in detail in  Sections \ref{section 3},    \ref{section 4},  and \ref{section 5}.   The application of OTFS-NOMA for uplink transmission will be considered   in Section \ref{section 6} only briefly,  due to space limitations.    

 

\section{Downlink OFTS-NOMA  - System Model } \label{section 3}
In this section, the  OTFS-NOMA downlink transmission protocol is described. 
In particular, assume  that the base station sends  $N M$   signals to $\text{U}_0$, denoted by $x_0[k,l]$, $k\in\{0, \cdots, N-1\}$, $l\in\{0,\cdots, M-1\}$.   By using the inverse symplectic finite Fourier transform (ISFFT),    the high-mobility user's symbols placed in the delay-Doppler plane are converted to   $N M$ symbols in the time-frequency plane as follows \cite{OTFS}:
\begin{align}
X_0[n,m] =& \frac{1}{NM}\sum^{N-1}_{k=0}\sum^{M-1}_{l=0}x_0[k,l]e^{j2\pi \left(\frac{kn}{N}-\frac{ml}{M}\right)}
,
\end{align}
where   $n\in\{0, \cdots, N-1\}$ and $m\in\{0,\cdots, M-1\}$. We note that the $N M$  time-frequency signals can be viewed as   $N$ OFDM symbols containing $M$ signals each.   We assume that a rectangular window is applied to the transmitted and received signals.  

The NOMA users' signals are placed directly in the time-frequency plane, and are superimposed with the high-mobility user's   signals, $X_0[n,m]$. With $NM$ orthogonal resource blocks available in the time-frequency plane,   there are different ways for the $K$ users to share the resource blocks.  For illustration purposes, we assume that $M$ users are selected from the $K$ opportunistic NOMA users, where  each NOMA user is to occupy one frequency subchannel and receive $N$ information bearing symbols, denoted by   $x_i(n)$, for $1\leq i \leq M$ and $0\leq n\leq N-1$.   The criterion for user scheduling and its impact on the performance of OTFS-NOMA will be discussed  in Section \ref{section 5}.  
Denote  the time-frequency signals to be sent to $\text{U}_i$ by $X_i[n,m]$, $1\leq i \leq N$. The following mapping scheme is used in this paper\footnote{We note that   mapping schemes different from \eqref{indooruser2} can also be used. For example, if $N$ users are scheduled and each user is to occupy one time slot and receives an OFDM-like symbol containing $M$ signals, we can set $ X_i[n,m] = x_i(m)$, for $n=i-1$.    }: 
\begin{align}\label{indooruser2}
X_i[n,m] =\left\{\begin{array}{ll} x_i(n) & \text{if}\quad m=i-1 \\ 0 &\text{otherwise}\end{array}\right.,
\end{align}
for $1\leq i \leq M$ and $0\leq n \leq N-1$.  

The base station superimposes $\text{U}_0$'s time-frequency signals with the NOMA users' as follows: 
\begin{align}\label{downlink1}
X[n,m] =& \frac{\gamma_0}{NM}\sum^{N-1}_{k=0}\sum^{M-1}_{l=0} {x}_0[k,l]e^{j2\pi \left(\frac{kn}{N}-\frac{ml}{M}\right)} +\sum^{M}_{i=1}\gamma_i X_i[n,m]
,
\end{align}
where   $\gamma_i$ denotes the NOMA power allocation coefficient of user $i$, and   $\sum^{M}_{i=0}\gamma_i^2=1$. 

The  transmitted signal at  the base station is obtained by applying   the Heisenberg transform to $X[n,m]$. By assuming      perfect orthogonality  between the transmit and receive pulses,   the received signal  at $\text{U}_i$  in the time-frequency plane  can be modelled  as follows \cite{OTFS, OTFS2,8424569} : 
\begin{align}\label{downlink1x}
Y_i[n,m] =&  H_i[n,m]X[n,m] +W_i[n,m],
\end{align}
where $W_i(n,m)$ is the white Gaussian noise in  the time-frequency plane, and $H_i(n,m) = \int\int h_i(\tau, \nu) e^{j2\pi \nu nT} e^{-j2\pi(\nu+m\Delta f)\tau} d\tau d\nu$.

\section{Downlink OTFS-NOMA   - Detecting   the High-Mobility User's Signals} \label{section 4}
For the proposed downlink OTFS-NOMA scheme,  $\text{U}_0$ directly detects its  signals in the delay-Doppler plane by treating the NOMA users' signals   as noise. In particular, in order to detect $\text{U}_0$'s signals,   the symplectic finite Fourier transform (SFFT) is applied to $Y_0[n,m]$ to obtain the delay-Doppler estimates as follows:
\begin{align}
y_0[k,l] =&\frac{1}{NM}\sum^{N-1}_{n=0}\sum^{M-1}_{m=0}  {Y}_0[n,m] e^{-j2\pi \left(\frac{nk}{N}-\frac{ml}{M}\right)}\\\nonumber
=&\frac{1}{NM}\sum^{M}_{q=0}\gamma_q\sum^{N-1}_{n=0}\sum^{M-1}_{m=0} x_q[n,m]    h_{w,0}\left(\frac{k-n}{NT}, \frac{l-m}{M\Delta f}\right) +z_0[k,l],
\end{align}
where $z_0[k,l]$ is     complex Gaussian noise, $ {x}_q[k,l]$, $1\leq q\leq M$, denotes the delay-Doppler representation of $X_q[n,m]$ and can be obtained by applying the SFFT to $X_q[n,m]$, 
the channel $h_{w,0}(\nu', \tau')$ is given by
\begin{align}
h_{w,0}(\nu', \tau') = \int\int h_i(\tau, \nu) w(\nu'-\nu, \tau'-\tau) e^{-j2\pi \nu \tau} d\tau d\nu,
\end{align}
and $w(\nu, \tau) = \sum^{N-1}_{c=0}\sum^{M-1}_{d=0} e^{-j2\pi (\nu c T - \tau d \Delta f)}$. To simplify the analysis,  the power of the complex-Gaussian distributed noise  is assumed to be normalized, i.e., $z_i[k,l]\sim CN(0,1)$, where $CN(a,b)$ denotes a complex Gaussian distributed random variable  with mean $a$ and variance $b$.  

By applying the channel model in \eqref{dd channel}, the relationship between the transmitted signals and the observations in the delay-Doppler plane can be expressed as follows \cite{OTFS, OTFS2,8424569} : 
\begin{align}\label{yx modelx1}
y_0[k,l]  
=&\sum^{M}_{q=0} \gamma_q\sum^{P_0}_{p=1} h_{0,p} x_q[(k-k_{\nu_{0,p}})_N, (l-l_{\tau_{0,p}})_M]  +z_0[k,l],
\end{align}
where  $(\cdot)_N$ defines the modulo $N$ operator.  As in \cite{8422467,otfs_diversity,8515088}, we assume that $N$ and $M$ are sufficiently large to ensure that  both $\tilde{k}_{\nu_{0,p}}$ and $\tilde{l}_{\tau_{0,p}}$ are zero, i.e., there is no interference caused by fractional delay or fractional Doppler shift. We note that for OTFS-OMA, increasing $N$ and $M$ can significantly reduce spectral efficiency, whereas the use of large $N$ and $M$ becomes possible for OTFS-NOMA   because of the spectrum  sharing of  users with different mobility profiles.

Define $\mathbf{y}_{0,k}=\begin{bmatrix}y_0[k,0] & \cdots & y_0[k,M-1]\end{bmatrix}^T$ and $\mathbf{y}_0=\begin{bmatrix} \mathbf{y}_{0,0}^T&\cdots &\mathbf{y}_{0,N-1}^T\end{bmatrix}^T$. Similarly, the signal vector $ {\mathbf{x}}_i$ and the noise vector $\mathbf{z}_0$ are constructed from $x_i[k,l]$ and $z_0[k,l]$, respectively.  Based on \eqref{yx modelx1}, the system  model can be expressed in  matrix form as follows:
\begin{align}\label{yx model2}
 \mathbf{y}_0 =\gamma_0\mathbf{H}_0 \mathbf{x}_0 +\underset{\text{Interference and noise terms}}{\underbrace{\sum^{M}_{q=1}\gamma_q\mathbf{H}_0 {\mathbf{x}}_q +\mathbf{z}_0}},
\end{align}
where   $\mathbf{H}_0$ is a block-circulant matrix and defined as follows:
\begin{align}
\mathbf{H}_0 = \begin{bmatrix} \mathbf{A}_{0,0} & \mathbf{A}_{0,N-1}&\cdots &\mathbf{A}_{0,2} & \mathbf{A}_{0,1} \\  
\mathbf{A}_{0,1} &\mathbf{A}_{0,0}&\ddots &\mathbf{A}_{0,3} &\mathbf{A}_{0,2}
\\
\vdots&\ddots&\ddots &\ddots &\vdots
\\
\mathbf{A}_{0,N-2} &\mathbf{A}_{0,N-3}&\ddots &\mathbf{A}_{0,0} &\mathbf{A}_{0,N-1}
\\
\mathbf{A}_{0,N-1} &\mathbf{A}_{0,N-2}&\ddots &\mathbf{A}_{0,1} &\mathbf{A}_{0,0}
\end{bmatrix},
\end{align}
and each submatrix $\mathbf{A}_{0,n}$ is an $M\times M$ circulant matrix whose structure is determined  by \eqref{yx modelx1}.

{\it Example:} Consider a special case with $N=4$ and $M=3$, and $\text{U}_0$'s channel is given by
\begin{align}\label{yx channel model}
h_0(\tau, \nu)  
=& h_{0,0}\delta (\tau) \delta (\nu) + h_{0,1}\delta\left(\tau -\frac{1}{M\Delta f} \right)\delta\left(\nu - \frac{3}{NT} \right) ,
\end{align}
which means $
k_0=0$, $k_1 = 3$, $l_0=0$, $l_1 = 1$. 
Therefore, the block-circulant matrix is given by 
\begin{align}
\mathbf{H}_0 = \begin{bmatrix} \mathbf{A}_{0,0} & \mathbf{A}_{0,3} &\mathbf{A}_{0,2} & \mathbf{A}_{0,1} \\  
\mathbf{A}_{0,1} &\mathbf{A}_{0,0} &\mathbf{A}_{0,3} &\mathbf{A}_{0,2}
\\
\mathbf{A}_{0,2} &\mathbf{A}_{0,1} &\mathbf{A}_{0,0} &\mathbf{A}_{0,3}
\\
\mathbf{A}_{0,3} &\mathbf{A}_{0,2} &\mathbf{A}_{0,1} &\mathbf{A}_{0,0}
\end{bmatrix},
\end{align}
where $\mathbf{A}_{0,0}=h_{0,0}\mathbf{I}_3$, $\mathbf{A}_{0,2}=\mathbf{A}_{0,3}=\mathbf{0}_{3\times 3}$ and 
$
\mathbf{A}_{0,1} = \begin{bmatrix} 0 &0 & h_{0,1} \\ h_{0,1} &0 &0\\ 0 &h_{0,1} & 0\end{bmatrix}$.

{\it Remark 1:} It is well known that an $n\times n$ circulant matrix can be diagonalized by the $n\times n$ fast Fourier transform (FFT) and inverse FFT matrices, denoted by $\mathbf{F}_n$ and $\mathbf{F}_n^{-1}$, respectively.  We note that directly  applying the FFT factorization to $\mathbf{H}_0$ is not possible, since $\mathbf{H}_0$ is not a circulant  matrix, but a block circulant matrix.  

Because  of the structure of $\mathbf{H}_0$, inter-symbol interference  still exists in the considered OTFS-NOMA system, and equalization is needed.     We consider  two equalization approaches, FD-LE and FD-DFE, which were both originally developed  for single-carrier transmission  with cyclic prefix \cite{995852,1195504}. 

\vspace{-1em}

\subsection{  Design and Performance  of FD-LE}
The proposed FD-LE consists of two steps. The first step is to multiply the observation vector $\mathbf{y}_0$ by $\mathbf{F}_N \otimes \mathbf{F}_M^H  $, which leads to the result  in the following proposition. 
\begin{proposition}\label{proposition1}
By applying the detection matrix $\mathbf{F}_N \otimes \mathbf{F}_M^H  $ to   observation vector $\mathbf{y}_0$,  the received signals  for OTFS-NOMA downlink transmission can be written as follows:
\begin{align}\label{yx model6}
\tilde{\mathbf{y}}_0
 = & \mathbf{D}_0  (\mathbf{F}_N \otimes \mathbf{F}_M^H)\left(\gamma_0\mathbf{x}_0 +\sum^{M}_{q=1}\gamma_q  {\mathbf{x}}_q \right) +\tilde{\mathbf{z}}_0,
\end{align} 
where $\tilde{\mathbf{y}}_0=(\mathbf{F}_N \otimes \mathbf{F}_M^H  ) \mathbf{y}_0 $, $\tilde{\mathbf{z}}_0=(\mathbf{F}_N\otimes \mathbf{F}_M^H) \mathbf{z}_0$, $\mathbf{D}_0$ is a diagonal matrix whose $(kM+l+1)$-th  diagonal element is given by  
\begin{align}
D_{0}^{k,l}= \sum^{N-1}_{n=0}\sum^{M-1}_{m=0} a_{0,n}^{m,1} e^{j2\pi \frac{lm}{M}} e^{-j2\pi \frac{kn}{N}},
\end{align}
for $0\leq k \leq N-1$, $0\leq l \leq M-1$, and 
 $a^{m,1}_{0,n}$ is the element located in the $(nM+m+1)$-th row and the first column of $\mathbf{H}_0$. 
\end{proposition}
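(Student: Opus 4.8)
The plan is to reduce the proposition to a single diagonalization identity and then verify that identity by exploiting the two-level circulant structure of $\mathbf{H}_0$. Since the system model \eqref{yx model2} can be written as $\mathbf{y}_0 = \mathbf{H}_0\left(\gamma_0\mathbf{x}_0 + \sum_{q=1}^{M}\gamma_q\mathbf{x}_q\right) + \mathbf{z}_0$, left-multiplying by $\mathbf{F}_N\otimes\mathbf{F}_M^H$ produces \eqref{yx model6} as soon as one shows
\[
(\mathbf{F}_N\otimes\mathbf{F}_M^H)\,\mathbf{H}_0 = \mathbf{D}_0\,(\mathbf{F}_N\otimes\mathbf{F}_M^H),
\]
i.e.\ that $\mathbf{F}_N\otimes\mathbf{F}_M^H$ carries $\mathbf{H}_0$ into its diagonalized form $\mathbf{D}_0$. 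Thus the entire content of the proposition is this eigen-decomposition, and everything else follows by inserting it into the factored model and reading off $\tilde{\mathbf{y}}_0=(\mathbf{F}_N\otimes\mathbf{F}_M^H)\mathbf{y}_0$ and $\tilde{\mathbf{z}}_0=(\mathbf{F}_N\otimes\mathbf{F}_M^H)\mathbf{z}_0$.

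To establish the identity I would first record the structural fact that $\mathbf{H}_0$ is block circulant with circulant blocks: the outer $N\times N$ block pattern is circulant in the submatrices $\mathbf{A}_{0,n}$, and each $M\times M$ block $\mathbf{A}_{0,n}$ is itself circulant, as dictated by the modulo-$N$ and modulo-$M$ shifts $x_q[(k-k_{\nu_{0,p}})_N,(l-l_{\tau_{0,p}})_M]$ in \eqref{yx modelx1}. In particular $\mathbf{H}_0$ is completely determined by its first column, whose entries are exactly the quantities $a_{0,n}^{m,1}$ in the statement. I would then factor the detection matrix as $\mathbf{F}_N\otimes\mathbf{F}_M^H = (\mathbf{F}_N\otimes\mathbf{I}_M)(\mathbf{I}_N\otimes\mathbf{F}_M^H)$ and diagonalize in two stages. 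In the first stage, $\mathbf{I}_N\otimes\mathbf{F}_M^H$ acts blockwise and diagonalizes every inner block simultaneously: because all $M\times M$ circulant matrices share the common DFT eigenbasis, $\mathbf{F}_M^H\mathbf{A}_{0,n}(\mathbf{F}_M^H)^{-1}$ is diagonal for every $n$, with diagonal entries equal to the length-$M$ transform (sign $e^{+j2\pi lm/M}$, matching $\mathbf{F}_M^H$) of the first column of $\mathbf{A}_{0,n}$. The conjugated matrix is then block circulant with diagonal blocks.

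In the second stage, $\mathbf{F}_N\otimes\mathbf{I}_M$ diagonalizes the remaining outer circulant structure: a block-circulant matrix with blocks $\mathbf{B}_n$ is block-diagonalized into the blocks $\sum_n\mathbf{B}_n e^{-j2\pi kn/N}$, and since each $\mathbf{B}_n$ is already diagonal the overall result is a genuine diagonal matrix. Collecting the two stages, the $(kM+l+1)$-th diagonal entry is the superposition of the inner eigenvalues weighted by $e^{-j2\pi kn/N}$, which is precisely the two-dimensional DFT $D_0^{k,l}=\sum_{n}\sum_{m}a_{0,n}^{m,1}e^{j2\pi lm/M}e^{-j2\pi kn/N}$ claimed in the proposition.

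The main obstacle is bookkeeping rather than conceptual: one must keep the two transform conventions straight so that the delay index $l$ is paired with $\mathbf{F}_M^H$ (sign $+j$) and the Doppler index $n,k$ with $\mathbf{F}_N$ (sign $-j$), exactly as forced by the opposite signs of $\frac{kn}{N}$ and $\frac{ml}{M}$ in the ISFFT; this is also the reason, noted in Remark 1, that a single FFT cannot be applied directly to the block matrix. A safe alternative that sidesteps the ordering issues is to verify the identity entrywise: compute the $(kM+l+1,\,k'M+l'+1)$ entry of $(\mathbf{F}_N\otimes\mathbf{F}_M^H)\mathbf{H}_0(\mathbf{F}_N\otimes\mathbf{F}_M^H)^{-1}$ using the generation of $\mathbf{H}_0$ from its first column, and show via the orthogonality of the DFT characters over the modulo-$N$ and modulo-$M$ indices that the off-diagonal entries vanish while the diagonal entries equal $D_0^{k,l}$.
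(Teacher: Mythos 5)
Your proposal is correct and follows essentially the same route as the paper's proof: both factor $\mathbf{F}_N\otimes\mathbf{F}_M^H$ into $\mathbf{F}_N\otimes\mathbf{I}_M$ and $\mathbf{I}_N\otimes\mathbf{F}_M^H$, diagonalize the outer block-circulant structure and the inner circulant blocks in two stages, and recombine via the Kronecker mixed-product property to obtain the two-dimensional DFT of the first column as the diagonal of $\mathbf{D}_0$. The only difference is that you apply the two stages in the opposite order, which is immaterial since the two factors commute.
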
 
\begin{proof}
Please refer to Appendix \ref{proof1}. 
\end{proof}

With the simplified signal  model shown in \eqref{yx model6}, the second step of FD-LE is to apply $\left( \mathbf{F}_N \otimes \mathbf{F}_M^H\right)^{-1}\mathbf{D}_0 ^{-1}$ to $\tilde{\mathbf{y}}_0$. Thus,   $\text{U}_0$'s received signal is given by
\begin{align}\label{approach 1 system model 1}
\breve{\mathbf{y}}_0
 = &  \gamma_0\mathbf{x}_0 +\underset{\text{Interference and noise terms}}{\underbrace{\sum^{M}_{q=1} \gamma_q {\mathbf{x}}_q  +\left( \mathbf{F}_N \otimes \mathbf{F}_M^H\right)^{-1}\mathbf{D}_0 ^{-1}\tilde{\mathbf{z}}_0}},
\end{align} 
where $\breve{\mathbf{y}}_0 = \left( \mathbf{F}_N \otimes \mathbf{F}_M^H\right)^{-1}\mathbf{D}_0 ^{-1}\tilde{\mathbf{y}}_0$. To simplify the  analysis,  we assume that the powers of all users' information-bearing signals are identical, which means that the transmit signal-to-noise ratio (SNR) can be defined as $\rho=\mathcal{E}\{|x_0[k,l]|^2 \}=\mathcal{E}\{|x_i(n)|^2 \}$, since the noise power is assumed to be normalized \footnote{Following   steps similar to those in the proofs for Proposition \ref{proposition1}, $W_i[n,m]\sim CN(0,1)$ if $z_i[k,l]\sim CN(0,1)$. }.       The following lemma provides  the   signal-to-interference-plus-noise ratio (SINR) achieved by FD-LE.  
\begin{lemma} \label{lemma1}
Assume that $\gamma_i=\gamma_1$, for $1\leq i \leq N$. By using FD-LE, the SINRs for detecting all $x_0[k,l]$,  $0\leq k \leq N-1$ and $0\leq l \leq M-1$, are identical  and   given by 
\begin{align}\label{sinrzf}
\text{SINR}_{0,kl}^{\text{LE}} = \frac{\rho\gamma_0^2}{\rho \gamma^2_1 +\frac{1}{NM}\sum^{N-1}_{\tilde{k}=0}\sum^{M-1}_{\tilde{l}=0} |D_0^{\tilde{k},\tilde{l}}|^{-2}}.
\end{align}

\end{lemma}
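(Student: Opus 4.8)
The plan is to work componentwise on the post-equalization model \eqref{approach 1 system model 1} and to identify, for each delay-Doppler bin $(k,l)$, the powers of the desired term, the residual NOMA interference, and the filtered noise separately. Writing the $(kM+l+1)$-th entry of $\breve{\mathbf{y}}_0$ gives $\breve{y}_0[k,l]=\gamma_0 x_0[k,l]+\sum_{q=1}^{M}\gamma_q x_q[k,l]+\zeta_0[k,l]$, where $\zeta_0[k,l]$ is the corresponding entry of the colored noise vector $(\mathbf{F}_N\otimes\mathbf{F}_M^H)^{-1}\mathbf{D}_0^{-1}\tilde{\mathbf{z}}_0$. The desired-signal power is immediately $\mathcal{E}\{|\gamma_0 x_0[k,l]|^2\}=\rho\gamma_0^2$ from the definition of $\rho$.

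For the interference power, I would exploit the mapping \eqref{indooruser2}: since $X_q[n,m]$ is supported only on the single subchannel $m=q-1$, applying the SFFT renders $x_q[k,l]$ proportional, up to a unit-modulus phase $e^{j2\pi(q-1)l/M}$, to the $N$-point DFT of the symbol sequence $x_q(0),\dots,x_q(N-1)$. Treating the information symbols as mutually uncorrelated with power $\rho$ and using the constant-modulus transform kernel, the per-bin power of each NOMA user evaluates to $\mathcal{E}\{|x_q[k,l]|^2\}=\rho/M$, independently of $(k,l)$. Invoking the assumption $\gamma_q=\gamma_1$ for all $q\ge 1$ together with the independence of the $M$ users' signals, the cross terms vanish and the aggregate interference power becomes $\gamma_1^2\sum_{q=1}^{M}\rho/M=\rho\gamma_1^2$, again independent of $(k,l)$.

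The noise term is the step that requires the most care, and it is where I expect the main obstacle to lie. Substituting $\tilde{\mathbf{z}}_0=(\mathbf{F}_N\otimes\mathbf{F}_M^H)\mathbf{z}_0$ and using $\mathbf{z}_0\sim CN(\mathbf{0},\mathbf{I})$, the noise covariance collapses to $(\mathbf{F}_N\otimes\mathbf{F}_M^H)^{-1}|\mathbf{D}_0|^{-2}(\mathbf{F}_N\otimes\mathbf{F}_M^H)$, where $|\mathbf{D}_0|^{-2}$ is the diagonal matrix with entries $|D_0^{\tilde{k},\tilde{l}}|^{-2}$; this uses that $\mathbf{F}_N$ and $\mathbf{F}_M^H$, hence their Kronecker product, are unitary. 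The key observation is that every entry of the unitary matrix $\mathbf{F}_N\otimes\mathbf{F}_M^H$ has constant modulus $1/\sqrt{NM}$, so conjugating the diagonal matrix $|\mathbf{D}_0|^{-2}$ by it yields a matrix whose diagonal entries are all equal to the single average $\frac{1}{NM}\sum_{\tilde{k}=0}^{N-1}\sum_{\tilde{l}=0}^{M-1}|D_0^{\tilde{k},\tilde{l}}|^{-2}$. Hence the per-bin noise power is this constant, independent of $(k,l)$.

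Finally, dividing the desired-signal power by the sum of the interference and noise powers produces \eqref{sinrzf}, and since each of the three powers was shown to be independent of $(k,l)$, the SINRs for all $x_0[k,l]$ are identical, as claimed. The crux of the argument is the constant-modulus property of the Kronecker-DFT matrix, which is exactly what converts the bin-dependent diagonal $\mathbf{D}_0^{-1}$ into a single averaged noise-enhancement term and makes the per-bin SINR uniform.
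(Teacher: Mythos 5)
Your proposal is correct and reaches \eqref{sinrzf} by the same overall decomposition as the paper (signal, residual NOMA interference, and filtered noise read off from \eqref{approach 1 system model 1}), but the two non-trivial steps are justified differently. For the interference, the paper stacks the NOMA signals into $\breve{\mathbf{x}}$, notes $\mathcal{E}\{\breve{\mathbf{x}}\breve{\mathbf{x}}^H\}=\rho\mathbf{I}_{NM}$, and invokes unitarity of $\mathbf{F}_N\otimes\mathbf{F}_M^H$ to get $\rho\gamma_1^2\mathbf{I}_{NM}$; you instead compute the per-bin power of each $x_q[k,l]$ explicitly as $\rho/M$ from the single-subchannel support of $X_q[n,m]$ and sum over the $M$ users --- equivalent, and consistent with the appendix's $1/\sqrt{NM}$ SFFT normalization. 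For the noise, the paper shows that $(\mathbf{F}_N^H\otimes\mathbf{F}_M)\mathbf{D}_0^{-1}\mathbf{D}_0^{-H}(\mathbf{F}_N\otimes\mathbf{F}_M^H)$ is block-circulant (reversing the construction in Proposition \ref{proposition1}), concludes its diagonal is constant, and evaluates the constant via the normalized trace; you observe directly that every entry of $\mathbf{F}_N\otimes\mathbf{F}_M^H$ has modulus $1/\sqrt{NM}$, so conjugating a diagonal matrix by it yields diagonal entries $\sum_{\tilde k,\tilde l}|U_{j i}|^2|D_0^{\tilde k,\tilde l}|^{-2}=\frac{1}{NM}\sum_{\tilde k,\tilde l}|D_0^{\tilde k,\tilde l}|^{-2}$ for every $i$. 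Your argument is more elementary and slightly more general (it applies to any unitary matrix with constant-modulus entries, not just the Kronecker-DFT), whereas the paper's block-circulant route reuses machinery it has already built and needs for Lemma \ref{lemmax2}. Both yield identical per-bin powers, hence identical SINRs across all $(k,l)$.
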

\begin{proof}
Please refer to Appendix \ref{prooflemma1}. 
\end{proof}
{\it Remark 2:} The proof of Lemma \ref{lemma1} shows that $\sum^{M}_{i=0}\gamma_i^2=1$ can be simplified as $\gamma_0^2+\gamma_i^2=1$ for $1\leq i \leq M$, which is the motivation  for assuming  $\gamma_i=\gamma_1$. Following steps similar to those in the proofs for Proposition \ref{proposition1} and Lemma \ref{lemma1}, one can show that   directly applying $\mathbf{H}_0^{-1}$ to the observation vector yields   the same SINR. However, the proposed   FD-LE scheme can be implemented  more efficiently since  $ \left( \mathbf{F}_N \otimes \mathbf{F}_M^H\right)^{-1} =  \mathbf{F}_N^H \otimes \mathbf{F}_M  $  and $\mathbf{D}_0$ is a diagonal matrix. Hence,  the inversion of a full $NM\times NM$ matrix is avoided. 

  The outage probability achieved by FD-LE is given by $\mathrm{P}(\log(1+\text{SINR}_{0,kl}^{\text{LE}} )<R_0)$, where $R_i$, $0\leq i\leq M$, denotes $\text{U}_i$'s target data rate.  
 It is difficult to analyze the outage probability for  the  following two reasons. First, the $D_0^{k,l}$, $k\in \{0, \cdots, N-1\}$,   $l\in\{0, \cdots, M-1\}$, are not statistically independent, and second,   the distribution of a sum of the inverse of exponentially distributed random variables is difficult to characterize. The following lemma provides an asymptotic result  for the outage probability based on the SINR provided in Lemma \ref{lemma1}.

\begin{lemma}\label{lemmax2}
If $\gamma_0^2 > \gamma^2_1\epsilon_0$, the diversity order achieved by FD-LE is one, where $\epsilon_0=2^{R_0}-1$. Otherwise, the outage probability is always one. 
\end{lemma}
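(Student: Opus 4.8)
The plan is to reduce the outage probability $\mathrm{P}(\log(1+\text{SINR}_{0,kl}^{\text{LE}})<R_0)$ to a statement about the random quantity $S$, defined by $S=\frac{1}{NM}\sum_{\tilde{k}=0}^{N-1}\sum_{\tilde{l}=0}^{M-1}|D_0^{\tilde{k},\tilde{l}}|^{-2}$, that appears in the SINR of Lemma \ref{lemma1}, and then to sandwich that probability between two quantities both of order $\rho^{-1}$. First I would rewrite the outage event: since $\log(1+x)<R_0\iff x<\epsilon_0$ and the SINR of Lemma \ref{lemma1} equals $\frac{\rho\gamma_0^2}{\rho\gamma_1^2+S}$, the event $\text{SINR}<\epsilon_0$ is algebraically equivalent to $\rho(\gamma_0^2-\epsilon_0\gamma_1^2)<\epsilon_0 S$. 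This single inequality drives both claims. When $\gamma_0^2\leq\epsilon_0\gamma_1^2$, its left-hand side is non-positive while $S>0$ almost surely, so the inequality holds with probability one; equivalently, the SINR is bounded above by $\gamma_0^2/\gamma_1^2\leq\epsilon_0$, and the outage probability is identically one. This disposes of the second assertion immediately.

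For the regime $\gamma_0^2>\epsilon_0\gamma_1^2$, I would next pin down the marginal law of each $D_0^{k,l}$. Because $D_0^{k,l}=\sum_{n}\sum_{m} a_{0,n}^{m,1}e^{j2\pi lm/M}e^{-j2\pi kn/N}$ is a fixed linear combination of the first-column entries of $\mathbf{H}_0$, and those entries are precisely the i.i.d. channel gains $h_{0,p}\sim CN(0,\frac{1}{P_0+1})$ placed at distinct tap positions, $D_0^{k,l}$ is a sum of the $h_{0,p}$ weighted by unit-modulus exponentials. Hence $D_0^{k,l}\sim CN(0,1)$ for every $(k,l)$ by the normalization $\sum_{p}\mathcal{E}\{|h_{0,p}|^2\}=1$, so each $|D_0^{k,l}|^2$ is marginally unit-mean exponential with $\mathrm{P}(|D_0^{k,l}|^2<\delta)\approx\delta$ for small $\delta$.

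I would then sandwich the outage probability, writing the event as $S>c\rho$ with $c=(\gamma_0^2-\epsilon_0\gamma_1^2)/\epsilon_0>0$. For the lower bound, retaining a single term gives $S\geq\frac{1}{NM|D_0^{0,0}|^2}$, so $\mathrm{P}(S>c\rho)\geq\mathrm{P}(|D_0^{0,0}|^2<\frac{1}{c\rho NM})\sim\frac{1}{c\rho NM}$, which shows the diversity order is at most one. For the upper bound, a pigeonhole argument shows $S>c\rho$ forces at least one term to satisfy $|D_0^{\tilde{k},\tilde{l}}|^2<\frac{1}{c\rho}$ (otherwise all $NM$ summands are at most $c\rho$ and the sum cannot exceed $c\rho NM$); a union bound over the $NM$ indices then yields $\mathrm{P}(S>c\rho)\leq NM(1-e^{-1/(c\rho)})\sim\frac{NM}{c\rho}$, so the diversity order is at least one. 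Combining the two bounds gives diversity order exactly one.

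The main obstacle, explicitly flagged in the text preceding the lemma, is the statistical dependence among the $D_0^{k,l}$, which precludes a closed-form distribution for $S$. The key observation that lets the argument go through is that neither the lower bound (one term) nor the upper bound (union bound plus pigeonhole) requires the joint distribution: both depend only on the marginal exponential law of each $|D_0^{k,l}|^2$ established above. This is precisely what allows the exponent of $\rho$ to be extracted cleanly despite the correlation.
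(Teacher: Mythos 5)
Your proposal is correct and follows essentially the same route as the paper's proof in Appendix C: reduce the outage event via the SINR of Lemma \ref{lemma1}, establish that each $D_0^{k,l}$ is marginally $CN(0,1)$ as a unit-modulus-weighted superposition of the $(P_0+1)$ i.i.d.\ taps, lower-bound the outage probability by retaining a single reciprocal term, and upper-bound it via the minimum of the $|D_0^{k,l}|^2$ plus a union bound (your pigeonhole step is just an equivalent restatement of bounding the average of the reciprocals by the reciprocal of the minimum). The only cosmetic difference is that the paper additionally works out the block-circulant covariance of $\mathrm{Diag}(\mathbf{D}_0)$ to justify why the variables are correlated, whereas you simply observe, correctly, that neither bound needs more than the marginals.
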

\begin{proof}
Please refer to Appendix \ref{plemmax2}.
\end{proof}

{\it Remark 3:} Recall that the diversity order achieved by OTFS-OMA, where  the high-mobility user, $\text{U}_0$,   solely occupies  the bandwidth resources, is also one. Therefore, introducing the low-mobility users' signals in the time-frequency plane via OTFS-NOMA  does not compromise  $\text{U}_0$'s diversity order, but  improves the spectral efficiency, compared to OTFS-OMA. 
 
\vspace{-1em}

\subsection{Design and Performance  of FD-DFE}\label{subsection fd-dfe}
Different from FD-LE, which is a  linear equalizer,  FD-DFE is based on the idea of feeding back previously detected symbols. Since both $\mathbf{x}_0$ and $\mathbf{x}_q$, $q\geq 1$, experience the same fading channel, we first define $\mathbf{x}= \gamma_0   {\mathbf{x}}_0 + \sum^{M}_{q=1}\gamma_q   {\mathbf{x}}_q$, which are the signals to be recovered by FD-DFE.  Given the observations shown  in \eqref{yx model2}, the outputs  of the FD-DFE  are given by 
\begin{align}\label{yx model approach II}
\hat{\mathbf{x}}   =\mathbf{P}_0   \mathbf{y}_0  -  \mathbf{G}_0 \check{\mathbf{x}},
\end{align}
where $\check{\mathbf{x}}$ denote the decisions made on the symbols $ {\mathbf{x}}$,  $\mathbf{P}_0$ is the feed-forward part of the equalizer, and $\mathbf{G}_0$ is the feedback part of the equalizer.   Similar to \cite{995852,1195504}, we use  the following choices for $\mathbf{P}_0$ and $\mathbf{G}_0$: $
\mathbf{P}_0 = \mathbf{L}_0 (\mathbf{H}_0^H\mathbf{H}_0)^{-1} \mathbf{H}_0^H,
 $
and 
 $
\mathbf{G}_0 = \mathbf{L}_0 - \mathbf{I}_{NM}
 $, 
where $\mathbf{L}_0$ is a lower triangular matrix with its main diagonal elements being ones in order to ensure causality of the feedback signals. 
With the above choices for $\mathbf{P}_0$ and $\mathbf{G}_0$, $\text{U}_0$'s signals can be detected as follows:
\begin{align}\label{yx model approach II 2}
 \hat{\mathbf{x}} =&\mathbf{L}_0 (\mathbf{H}_0^H\mathbf{H}_0)^{-1} \mathbf{H}^H_0  \mathbf{y}_0  -  (  \mathbf{L}_0 - \mathbf{I}_{NM}) \check{\mathbf{x}}.
\end{align}
For FD-DFE,   $\mathbf{L}_0$ is obtained   from the Cholesky decomposition of  $\mathbf{H}_0$, i.e., $\mathbf{H}_0^H\mathbf{H}_0=\mathbf{L}_0^H\mathbf{\Lambda}_0\mathbf{L}_0$, where $\mathbf{L}_0$ is the desirable  lower triangular matrix, and $\mathbf{\Lambda}_0$ is a diagonal matrix.  Therefore,  the estimates of $ {\mathbf{x}}_0$ can be rewritten   as follows:
\begin{align}\nonumber
 \hat{\mathbf{x}} 
  =& 
 \mathbf{x}    + \mathbf{L}_0 (\mathbf{H}_0^H\mathbf{H}_0)^{-1} \mathbf{H}^H_0\mathbf{z}_0    \\  
 =& 
  \label{yx model approach II 3}  
  \gamma_0\mathbf{x}_0  +\underset{\text{Interference and noise terms}}{\underbrace{    \sum^{M}_{q=1}\gamma_q {\mathbf{x}}_q  + \mathbf{L}_0 (\mathbf{H}_0^H\mathbf{H}_0)^{-1} \mathbf{H}^H_0\mathbf{z}_0   }},
\end{align}
where  perfect decision-making is assumed, i.e., $\check{\mathbf{x}}=\mathbf{x}$, and there is no error propagation    \cite{1195504,Devillersphd,cypd1}.  We note that \eqref{yx model approach II 3} yields an upper bound on the reception reliability of FD-DFE when error propagation cannot be completely avoided. 

Following   steps similar to those in the proof of Lemma \ref{lemma1},  the covariance matrix for the interference-plus-noise term can be found as follows:
\begin{align}
 \mathbf{C}_{\text{cov}} = &  \rho \gamma_1^2    \mathbf{I}_{MN}+   \mathbf{L}_0 (\mathbf{H}_0^H\mathbf{H}_0)^{-1}  \mathbf{L}_0 ^H  =\rho \gamma_1^2    \mathbf{I}_{MN}+   \mathbf{\Lambda}_0^{-1},
\end{align}
where the last step follows  from the fact that   $\mathbf{L}_0$ is obtained from the Cholesky decomposition of $\mathbf{H}_0$. Therefore,    the SINR for detecting  $x_0[k,l]$ can be expressed as follows:
\begin{align}\label{SINR approach II}
\text{SINR}_{0,kl} = \frac{\rho \gamma_0^2}{\rho\gamma_1^2 + \lambda_{0,kl}^{-1}},
\end{align}
where $\lambda_{0,kl}$ is the $(kM+l+1)$-th element on the main diagonal of $\mathbf{\Lambda}_0$.

{\it Remark 4:} We note that  there is a fundamental difference between the two equalization schemes. One can observe from \eqref{sinrzf} that   the SINRs achieved by FD-LE  for different $x_0[k,l]$ are identical. However, for FD-DFE,   different symbols experience different effective fading gains, $\lambda_{0,kl}$.   Therefore, FD-DFE can realize unequal error protection for data streams with different priorities. This comes  at the price of a higher computational complexity.  

We further note that  the use of FD-DFE also ensures that multi-path diversity can be harvested, as shown in the following.  
The outage performance analysis for FD-DFE requires   knowledge of  the   distribution of the effective channel gains, $\lambda_{0,kl}$. Because of the implicit  relationship between   $\mathbf{\Lambda}_{0}$ and $\mathbf{H}_0$, a general expression for the outage probability  achieved by FD-DFE is difficult to obtain. 
 However, analytical results can be developed for special cases to show  that  the use of FD-DFE can realize the maximal multi-path diversity. 

In particular, the SINR for $x_0[N-1,M-1]$ is a function of $\lambda_{0,(N-1)(M-1)}$ which is the last element on the main diagonal of $\mathbf{\Lambda}_0$. Recall that $\mathbf{\Lambda}_0$  is obtained via    Cholesky decomposition,  i.e., $\mathbf{H}_0^H\mathbf{H}_0=\mathbf{L}_0^H\mathbf{\Lambda}_0\mathbf{L}_0$. Because $\mathbf{L}_0$ is a lower triangular matrix, $ \lambda_{0,(N-1)(M-1)}$ is equal to the element of $\mathbf{H}_0^H\mathbf{H}_0$ located in the $NM$-th column and the $NM$-th row, which means 
\begin{align} 
\lambda_{0,(N-1)(M-1)} = \sum^{P_0}_{p=0}|h_{0,p}|^2.
\end{align}
Since    the channel gains are i.i.d. and follow $h_{0,p}\sim CN(0,\frac{1}{P_0+1})$, the probability density function (pdf) of $\sqrt{P_0+1} \lambda_{0,(N-1)(M-1)} $ is given by
\begin{align}
f(x) = \frac{1}{P_0!}e^{-x}x^{P_0}.
\end{align}
By using the above pdf, the outage probability and the diversity order   can be obtained by some algebraic manipulations, as shown in the following corollary.

\begin{corollary}\label{lemma3}
Assume    $\gamma_0^2 > \gamma^2_1\epsilon_0$.  The use of FD-DFE realizes the following outage probability for detection of $x_0[N-1,M-1]$:
\begin{align}
\mathrm{P}^0_{N-1,M-1} = \frac{1}{P_0!}g\left(P_0+1, \frac{\epsilon_0(P_0+1)}{\rho(\gamma_0^2-\gamma_1^2\epsilon_0)}\right),
\end{align} 
 where $g(\cdot)$ denotes the incomplete Gamma function.  The full multi-path diversity order, $P_0+1$, is achievable for     $x_0[N-1,M-1]$
\end{corollary}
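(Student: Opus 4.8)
The plan is to reduce the outage event for $x_0[N-1,M-1]$ to a simple threshold event on the scalar effective channel gain $\lambda_{0,(N-1)(M-1)}$, and then to evaluate the resulting probability directly from the pdf established just above the statement.

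First I would start from the definition $\mathrm{P}^0_{N-1,M-1}=\mathrm{P}\left(\log\left(1+\text{SINR}_{0,(N-1)(M-1)}\right)<R_0\right)$ and, since $\log$ is taken to base two, rewrite it as $\mathrm{P}\left(\text{SINR}_{0,(N-1)(M-1)}<\epsilon_0\right)$ with $\epsilon_0=2^{R_0}-1$. Substituting the SINR expression from \eqref{SINR approach II} and writing $\lambda$ for $\lambda_{0,(N-1)(M-1)}$, the outage event becomes $\frac{\rho\gamma_0^2}{\rho\gamma_1^2+\lambda^{-1}}<\epsilon_0$. Because the denominator is positive, cross-multiplying and collecting terms yields $\rho\left(\gamma_0^2-\gamma_1^2\epsilon_0\right)\lambda<\epsilon_0$. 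It is at this step that the hypothesis $\gamma_0^2>\gamma_1^2\epsilon_0$ is needed: it makes the coefficient of $\lambda$ strictly positive, so the event is equivalent to $\lambda<\frac{\epsilon_0}{\rho\left(\gamma_0^2-\gamma_1^2\epsilon_0\right)}$. (If instead $\gamma_0^2\leq\gamma_1^2\epsilon_0$, the left-hand side is nonpositive while $\epsilon_0>0$, so the inequality holds for every realization and the outage probability equals one, mirroring the dichotomy already seen for FD-LE in Lemma \ref{lemmax2}.)

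Next I would evaluate the cumulative distribution function of $\lambda$ at this threshold. Since the pdf of $(P_0+1)\lambda$ is $\frac{1}{P_0!}x^{P_0}e^{-x}$, integrating from $0$ to the scaled threshold gives the lower incomplete gamma function, so that $\mathrm{P}(\lambda<t)=\frac{1}{P_0!}g\left(P_0+1,(P_0+1)t\right)$. Setting $t=\frac{\epsilon_0}{\rho\left(\gamma_0^2-\gamma_1^2\epsilon_0\right)}$ immediately produces the claimed closed form $\frac{1}{P_0!}g\left(P_0+1,\frac{\epsilon_0(P_0+1)}{\rho\left(\gamma_0^2-\gamma_1^2\epsilon_0\right)}\right)$. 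For the diversity order I would then send $\rho\to\infty$, which drives the second argument of $g(\cdot,\cdot)$ to zero, and apply the small-argument expansion $g(s,x)=\frac{x^s}{s}+o\left(x^s\right)$. The leading term scales as $\left(1/\rho\right)^{P_0+1}$, so $\mathrm{P}^0_{N-1,M-1}$ decays like $\rho^{-(P_0+1)}$, and the definition $d=-\lim_{\rho\to\infty}\frac{\log \mathrm{P}_{\text{out}}}{\log\rho}$ delivers the full multi-path diversity order $P_0+1$.

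The calculation itself is short, because the three essential ingredients---the SINR in \eqref{SINR approach II}, the identity $\lambda_{0,(N-1)(M-1)}=\sum_{p=0}^{P_0}|h_{0,p}|^2$, and the pdf of this sum of i.i.d.\ exponential gains---are all already in hand. The two points that require care rather than effort are the sign analysis when inverting the SINR inequality (to separate the finite-diversity regime from the probability-one regime) and the justification that the leading term of the incomplete gamma expansion genuinely dictates the high-SNR scaling, i.e.\ that the neglected terms are $o\!\left(\rho^{-(P_0+1)}\right)$ rather than merely smaller constants. Neither is a serious obstacle, so the main interpretive content of the corollary is the conclusion that FD-DFE recovers the maximal multi-path diversity for the last detected symbol, in contrast to the diversity order of one established for FD-LE.
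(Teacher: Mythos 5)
Your proposal is correct and follows essentially the same route the paper intends (the paper omits the details, stating only that the result follows "by some algebraic manipulations" from the ingredients it lists): invert the SINR inequality from \eqref{SINR approach II} into a threshold event on $\lambda_{0,(N-1)(M-1)}$, integrate the Gamma pdf to obtain the lower incomplete gamma function, and use the small-argument expansion $g(s,x)\sim x^s/s$ to read off the diversity order $P_0+1$. You also implicitly correct a typo in the paper's text: it is $(P_0+1)\lambda_{0,(N-1)(M-1)}$, not $\sqrt{P_0+1}\,\lambda_{0,(N-1)(M-1)}$, that has the unit-scale Gamma pdf $\frac{1}{P_0!}x^{P_0}e^{-x}$, which is exactly the scaling the corollary's closed-form expression requires.
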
 

{\it Remark 5:}  
The results in Corollary \ref{lemma3} can be extended to OTFS-OMA with FD-DFE straightforwardly.       We also note that   diversity gains larger than one are not achievable with FD-LE as shown in Lemma \ref{lemmax2}, which is one of the disadvantages of FD-LE compared to FD-DFE.  

{\it Remark 6:}  We note that the results in Corollary \ref{lemma3} are obtained by assuming that there is no error propagation. Furthermore, we note that not all $NM$ data streams can benefit from the full diversity gain.  The simulation results provided in Section \ref{section 7} show that the  diversity orders  achievable for $x_{0}[k,l]$, $k<N-1$ and $l<M-1$, are smaller than that for $x_{0}[N-1,M-1]$.

\section{Downlink OTFS-NOMA  - Detecting   the NOMA Users' Signals}\label{section 5} 
Successive interference cancellation (SIC) will be carried out by the NOMA users, where each NOMA user first decodes the high mobility user's signal in the delay-Doppler plane and then decodes its own signal in the time-frequency plane. The two stages of SIC are discussed in the following two subsections, respectively. 

\vspace{-1em}

\subsection{Stage I of SIC}
Following   steps similar to the ones in the previous section, each NOMA user also observes the mixture of the $(M+1)$ users' signals in the delay-Doppler plane as follows:
\begin{align}\label{yx modelz1}
 \mathbf{y}_i =\gamma_0\mathbf{H}_i \mathbf{x}_0 +\underset{\text{Interference and noise terms}}{\underbrace{\sum^{M}_{q=1}\gamma_q\mathbf{H}_i {\mathbf{x}}_q +\mathbf{z}_i}},
\end{align}
where $\mathbf{H}_i$ and $\mathbf{z}_i$ are defined similar to $\mathbf{H}_0$ and $\mathbf{z}_0$, respectively. 

We assume  that the  low-mobility  NOMA  users  do not experience Doppler shift, and therefore, their channels can be simplified as follows:
\begin{align}\label{dd channel2}
 h_i(\tau) = \sum^{P_i}_{p=1}h_{i,p}\delta(\tau-\tau_{i,p}) ,
\end{align}
for $1\leq i\leq K$, 
which  means that each NOMA user's channel matrix, $\mathbf{H}_i$, $1\leq i\leq N$, is a block-diagonal matrix,  
i.e., $\mathbf{A}_{i,0}$ is a non-zero circulant matrix and $\mathbf{A}_{i,n}=\mathbf{0}_{M\times M}$, for $1\leq n \leq N-1$.  Therefore, each NOMA user can divide its observation vector into $N$ equal-length sub-vectors, i.e., $\mathbf{y}_i=\begin{bmatrix} \mathbf{y}_{i,0}^T &\cdots &\mathbf{y}_{i,N-1}^T \end{bmatrix}^T$, which yields the following simplified system model:
\begin{align}\label{yx modelz2}
 \mathbf{y}_{i,n} =\gamma_0 {\mathbf{A}}_{i,0} \mathbf{x}_{0,n} +\sum^{M}_{q=1}\gamma_q\mathbf{A}_{i,0} {\mathbf{x}}_{q,n} +\mathbf{z}_{i,n},
\end{align}
where, similar to  $\mathbf{y}_{i,n}$, $\mathbf{x}_{i,n}$ and $\mathbf{z}_{i,n}$ are obtained from $\mathbf{x}_{i}$ and $\mathbf{z}_{i}$, respectively.  Therefore, unlike the high-mobility user, the NOMA users can perform  their signal detection   based on   reduced-size observation vectors, which reduces the computational complexity. 

Since $\mathbf{A}_{i,0}$ is a circulant matrix, the two equalization approaches used in the previous section are still applicable.  First, we consider the use of FD-LE.  
Following  the same steps as in the proof for  Proposition \ref{proposition1}, in the first step of FD-LE,  the FFT matrix is applied  to the reduced-size observation vector, which yields  the following: 
\begin{align}\label{yx model6z}
\tilde{\mathbf{y}}_{i,n}
 = & \tilde{\mathbf{D}}_i    \mathbf{F}_M^H\left(\gamma_0\mathbf{x}_{0,n} +\sum^{M}_{q=1}\gamma_q  {\mathbf{x}}_{q,n} \right) +\tilde{\mathbf{z}}_{i,n},
\end{align} 
where $\tilde{\mathbf{y}}_{i,n}=  \mathbf{F}_M^H   \mathbf{y}_{i,n} $ and $\tilde{\mathbf{z}}_{i,n}= \mathbf{F}_M^H \mathbf{z}_{i,n}$. Compared to  $\mathbf{D}_i$ in Proposition \ref{proposition1} which is an $NM\times NM$ matrix, $\tilde{\mathbf{D}}_i$ is an $M\times M$ diagonal matrix, and its $(l+1)$-th diagonal element    is given by  $
\tilde{D}_{i}^{l}=  \sum^{M-1}_{m=0} a_{i,0}^{m,1} e^{j2\pi \frac{lm}{M}}  $,
for  $0\leq l \leq M-1$, where 
 $a^{m,1}_{i,0}$ is the element located in the $(m+1)$-th row and the first column of $\mathbf{A}_{i,0}$. Unlike conventional OFDM, which uses $\mathbf{F}_M$ at the receiver, $\mathbf{F}_M^H$ is used here. Because $ \mathbf{F}_M^H\mathbf{A}_{i,0}\mathbf{F}_M = \left[\mathbf{F}_M\mathbf{A}^*_{i,0}\mathbf{F}_M ^H\right]^*$,  the sign of the exponent of the exponential component of $\tilde{D}_{i}^{l}$ is different from that in the conventional case.
 
In the second step of FD-LE,  $\mathbf{F}_M\tilde{\mathbf{D}}_i ^{-1}$ is applied to $\tilde{\mathbf{y}}_{i,n}$. Following steps similar to the ones in the proof for Lemma \ref{lemma1}, the SINR for detecting $x_{0}[k,l]$ can be obtained as follows: 
 \begin{align}
\text{SINR}_{0,kl}^{i, \text{LE}} = \frac{\rho\gamma_0^2}{\rho \gamma^2_1 +\frac{1}{M}\sum^{M-1}_{\tilde{l}=0} |\tilde{D}_i^{\tilde{l}}|^{-2}}.
\end{align}
We note that  $\text{SINR}_{0,k_1l}^{i, \text{LE}}=\text{SINR}_{0,k_2l}^{i, \text{LE}}$, for $k_1\neq k_2$,   due to the time invariant nature of the channels. 
 
 If FD-DFE is used,   the corresponding SINR for detecting  $x_0[k,l]$ is given by
 \begin{align}\label{SINR approach IIx}
\text{SINR}_{0,kl}^{i, \text{DFE}} = \frac{\rho \gamma_0^2}{\rho\gamma_1^2 + \tilde{\lambda}_{0,l}^{-1}},
\end{align}
where $\tilde{\lambda}_{0,l}$ is obtained from the Cholesky decomposition of $\mathbf{A}_{i,0}$. The details for the derivation are omitted here due to space limitations. 

\vspace{-1em}

\subsection{Stage II of SIC} \label{subsection 2}
Assume that $\text{U}_0$'s $NM$ signals can be decoded and  removed  successfully, which means that, in the time-frequency plane, the NOMA users observe the following: 
\begin{align} 
Y_i[n,m]   =&  \sum^{M}_{q=1}\gamma_q H_i[n,m] X_q[n,m] +W_{i}[n,m]
=  \gamma_1 H_i[n,m] x_{m+1}(n) +W_i[n,m]
,
\end{align}
where the last step follows from the mapping scheme used in \eqref{indooruser2} and   it is  assumed that all   NOMA users employ  the same power allocation coefficient. 
We note that  $\text{U}_{i}$ is only interested in   $Y_{i}[n, i-1]$, $0\leq n \leq N-1$.  
Therefore, $\text{U}_i$'s $n$-th information bearing signal, $x_i(n)$, can be detected by applying a one-tap equalizer as follows:
\begin{align}
\hat{x}_i(n) = & \frac{Y_i[n,i-1]}{\gamma_1H_i[n,i-1]}  ,
\end{align}
which means that the SNR for detecting $x_i(n)$ is given by
\begin{align}
\text{SNR}_{i,n} = \rho \gamma_1^2 |\tilde{D}_i^{i-1}|^2,
\end{align}
since $W_i[n,i-1]$ is white Gaussian noise and $H_i[n,i-1] = \tilde{D}_i^{i-1}$. We note that $\text{SNR}_{i,n_1}=\text{SNR}_{i,n_2}$, for $n_1\neq n_2$, which is due to the time-invariant  nature of the channel. 
 
 Without loss of generality, assume that the same target data rate $R_i$ is used for $x_{i}(n)$, $0\leq n \leq N-1$.  
Therefore, the outage probability for $x_i(n)$  is given by
\begin{align}
\mathrm{P}_{i,n}^{\text{LE}} = &1 -  \mathrm{P}\left( \text{SNR}_{i,n} >\epsilon_i, \text{SINR}_{0,kl}^{i,\text{LE}} >\epsilon_0, \forall l\right)\\\nonumber
= &1 -  \mathrm{P}\left( \rho \gamma_1^2 |\tilde{D}_i^{i-1}|^2 >\epsilon_i, \frac{\rho\gamma_0^2}{\rho \gamma^2_1 +\frac{1}{M}\sum^{M-1}_{l=0} |\tilde{D}_i^{l}|^{-2}}>\epsilon_0 \right),
\end{align}
if FD-LE is used in the first stage of SIC. If FD-DFE is used in the first stage of SIC, the outage probability for $x_i(n)$  is given by
\begin{align}
\mathrm{P}_{i,n}^{\text{DFE}} = &1 -  \mathrm{P}\left( \text{SNR}_{i,n} >\epsilon_i, \text{SINR}_{0,kl}^{i,\text{DFE}} >\epsilon_0, \forall l\right)\\\nonumber= & 1 -  \mathrm{P}\left( \rho \gamma_1^2 |\tilde{D}_i^{i-1}|^2 >\epsilon_i, \frac{\rho\gamma_0^2}{\rho \gamma^2_1 + \tilde{\lambda}_{0,l}^{-1}}>\epsilon_0, \forall l\right),
\end{align}
where $\epsilon_i = 2^{R_i}-1$. 
 Again because of the correlation  between the random variables $|\tilde{D}_i^{l}|^{-2}$ and $\tilde{\lambda}_{0,l}$, the exact expressions for the outage probabilities are difficult to obtain. Alternatively, the achievable diversity order   is analyzed in the following subsections. 

\subsubsection{Random User Scheduling} If the $M$ users are randomly selected from the $K$ available  users, which means that each $ |\tilde{D}_i^l|^2$ is  complex Gaussian distributed. For the FD-LE case, the outage probability, $\mathrm{P}_{i,n} ^{\text{LE}}  $, can be upper bounded as follows:
\begin{align}\label{eq371}
&
\mathrm{P}_{i,n} ^{\text{LE}} 
\leq  1 -  \mathrm{P}\left( \rho \gamma_1^2 |\tilde{D}^{\min}_i|^2 >\epsilon_i, \frac{\rho\gamma_0^2}{\rho \gamma^2_1 +  |\tilde{D}_i^{\min}|^{-2}}>\epsilon_0\right),
\end{align}
where $|\tilde{D}_i^{\min}|^{2}= \min \{  | \tilde{D}_{i}^{m}|^2, 0\leq m \leq M-1  \}$.
The upper bound on the outage probability in \eqref{eq371} can be rewritten  as follows:
\begin{align}
 \mathrm{P}_{i,n}  ^{\text{LE}}
\leq  1 -  \mathrm{P}\left(  |\tilde{D}^{\min}_i|^2 >\bar{\epsilon}\right),
\end{align}
where $\bar{\epsilon} = \max\left\{ \frac{\epsilon_0}{\rho( \gamma_0^2-\gamma_1^2\epsilon_0)}, \frac{\epsilon_i}{\rho\gamma_1^2}\right\}$. As a result,  an upper bound on the outage probability can be obtained as follows:
\begin{align}
 \mathrm{P}_{i,n}  ^{\text{LE}}\leq \mathrm{P}\left(  |\tilde{D}^{\min}_i|^2 <\bar{\epsilon}\right)\leq M \mathrm{P}\left( | \tilde{D}^{0}_i|^2 <\bar{\epsilon}\right)\doteq \frac{1}{\rho},
\end{align}
where $\mathrm{P}^o\doteq \rho^{-d}$ denotes   exponential equality, i.e., $d= - \underset{\rho\rightarrow \infty}{\lim}\frac{\log\mathrm{P}^o}{\log \rho}$ \cite{Zhengl03}. Therefore, the following corollary can be obtained.

\begin{corollary}\label{corollary3}
A diversity order of $1$ is achievable at the NOMA users for  the FD-LE approach. 
\end{corollary}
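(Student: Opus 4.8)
The plan is to establish that the NOMA user's outage probability $\mathrm{P}_{i,n}^{\text{LE}}$ decays as $\rho^{-1}$ at high SNR, which is exactly the statement that a diversity order of one is achievable. The natural starting point is the Stage-II outage expression derived above, namely the complement of the joint event that the SIC step succeeds (the SINR $\text{SINR}_{0,kl}^{i,\text{LE}}$ exceeds $\epsilon_0$ for every $l$) and that the user's own symbol is decoded (the SNR $\rho\gamma_1^2|\tilde{D}_i^{i-1}|^2$ exceeds $\epsilon_i$). As the text already flags, the obstacle to a closed-form evaluation is that the $M$ quantities $|\tilde{D}_i^l|^{-2}$ appearing in the SIC constraint are statistically correlated through the common channel taps, so the joint probability resists direct integration.

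To bypass this, I would decouple the constraints by replacing every $|\tilde{D}_i^l|^2$ in the interference-plus-noise terms by the single worst-case gain $|\tilde{D}_i^{\min}|^2 = \min_m |\tilde{D}_i^m|^2$; since tightening the denominators can only shrink the success event, this yields the valid upper bound in \eqref{eq371}, in which only one random variable survives. Under the standing assumption $\gamma_0^2 > \gamma_1^2\epsilon_0$, both the SIC and the direct-detection constraints can then be rearranged into the single threshold event $|\tilde{D}_i^{\min}|^2 > \bar{\epsilon}$ with $\bar{\epsilon} = \max\{\frac{\epsilon_0}{\rho(\gamma_0^2-\gamma_1^2\epsilon_0)},\frac{\epsilon_i}{\rho\gamma_1^2}\}$, so the outage is bounded by $\mathrm{P}(|\tilde{D}_i^{\min}|^2 < \bar{\epsilon})$. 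Because random user scheduling makes each $\tilde{D}_i^l$ a complex Gaussian, so that $|\tilde{D}_i^l|^2$ is exponentially distributed, a union bound gives $\mathrm{P}(|\tilde{D}_i^{\min}|^2 < \bar{\epsilon}) \leq M\,\mathrm{P}(|\tilde{D}_i^0|^2 < \bar{\epsilon})$; and since the exponential CDF is linear near the origin while $\bar{\epsilon}$ scales as $\rho^{-1}$, the right-hand side is $\doteq \rho^{-1}$, proving the diversity order is at least one. To confirm it is exactly one, I would observe that the detection of $x_i(n)$ rests on the single effective gain $|\tilde{D}_i^{i-1}|^2$, whose exponential law lower-bounds the outage by $\mathrm{P}(\rho\gamma_1^2|\tilde{D}_i^{i-1}|^2 < \epsilon_i) \doteq \rho^{-1}$, ruling out any larger exponent.

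The step I expect to be the main obstacle is justifying the decoupling via $|\tilde{D}_i^{\min}|^2$: one must check carefully that substituting the minimum gain into both constraints simultaneously genuinely enlarges the outage event (so the inequality goes in the right direction), and that the union bound over the correlated $|\tilde{D}_i^l|^2$ does not artificially inflate the diversity exponent beyond the true value. Once this single-variable reduction is secured, the remaining argument is the routine high-SNR expansion of an exponential CDF and collects no further difficulty.
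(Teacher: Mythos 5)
Your argument is correct and follows the paper's own route essentially step for step: the min-gain substitution giving the upper bound in \eqref{eq371}, the reduction to the single threshold event $|\tilde{D}_i^{\min}|^2>\bar{\epsilon}$, and the union bound over the $M$ exponentially distributed gains yielding $\doteq\rho^{-1}$. Your added converse (lower-bounding the outage by $\mathrm{P}(\rho\gamma_1^2|\tilde{D}_i^{i-1}|^2<\epsilon_i)\doteq\rho^{-1}$ to pin the exponent at exactly one) is a small, valid extra beyond what the paper records, which only establishes achievability.
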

Our simulation results in Section \ref{section 7} show that a diversity order of $1$ is also achievable for FD-DFE, although we do not have a formal proof for this conclusion, yet.

\subsubsection{Realizing  Multi-User Diversity}
 The diversity order of OTFS-NOMA can be improved by carrying out opportunistic user scheduling, which yields  multi-user diversity gains.    For illustration purpose, we propose a greedy user scheduling policy, where a single NOMA user   is scheduled to transmit in all resource blocks of the time-frequency plane. From the analysis of the random   scheduling case we deduce  that $|\tilde{D}_i^{\text{min}}|^2$ is critical to the outage performance. Therefore, the scheduled NOMA user,   denoted by $\text{U}_{i^*}$, is selected based on the following criterion: 
\begin{align}\label{im selecy}
i^* = \arg \underset{i \in\{1, \cdots, K\}}{\max} \left\{|\tilde{D}_i^{\min}|^{2} \right\}.
\end{align}

By using the assumption  that the users' channel gains are independent and following steps similar to the ones in the proof for Lemma \ref{lemmax2}, the following corollary can be obtained in a straightforward manner.

\begin{corollary}\label{corollary3}
For FD-LE, the user scheduling strategy shown in \eqref{im selecy} realizes the maximal multi-user diversity gain, $K$. 
\end{corollary}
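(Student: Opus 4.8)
The plan is to reuse the upper-bounding machinery already developed for the random scheduling case and then convert the intra-user union bound into an inter-user product by exploiting that the $K$ users' channels are independent. First I would write the outage probability of the scheduled user $\text{U}_{i^*}$ and, following the same steps that produced the random scheduling bound, reduce it to the event that $\text{U}_{i^*}$'s worst frequency-domain gain falls below a threshold,
\begin{align}
\mathrm{P}_{i^*,n}^{\text{LE}} \leq \mathrm{P}\left(|\tilde{D}_{i^*}^{\min}|^2 < \bar{\epsilon}\right),
\end{align}
where $\bar{\epsilon} = \max\left\{\frac{\epsilon_0}{\rho(\gamma_0^2-\gamma_1^2\epsilon_0)}, \frac{\epsilon_i}{\rho\gamma_1^2}\right\}$. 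As in Lemma \ref{lemmax2}, the condition $\gamma_0^2 > \gamma_1^2\epsilon_0$ ensures $\bar{\epsilon}>0$, and $\bar{\epsilon}$ scales as $\rho^{-1}$ for large $\rho$.

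The key step is to invoke the selection rule \eqref{im selecy}. Since $\text{U}_{i^*}$ is chosen to maximize $|\tilde{D}_i^{\min}|^2$ over all $K$ candidates, we have $|\tilde{D}_{i^*}^{\min}|^2=\max_{i}|\tilde{D}_i^{\min}|^2$, so the event $\{|\tilde{D}_{i^*}^{\min}|^2<\bar{\epsilon}\}$ coincides with the event that \emph{every} user's worst gain lies below $\bar{\epsilon}$. Because the channel taps $\{h_{i,p}\}$ are independent across $i$, the random variables $|\tilde{D}_i^{\min}|^2$ are mutually independent, and the i.i.d.\ assumption makes each marginal identical, so the probability factorizes as
\begin{align}
\mathrm{P}\left(|\tilde{D}_{i^*}^{\min}|^2<\bar{\epsilon}\right)=\prod_{i=1}^{K}\mathrm{P}\left(|\tilde{D}_i^{\min}|^2<\bar{\epsilon}\right)=\left[\mathrm{P}\left(|\tilde{D}_1^{\min}|^2<\bar{\epsilon}\right)\right]^K.
\end{align}
I would then substitute the single-user estimate from the random scheduling analysis, $\mathrm{P}(|\tilde{D}_1^{\min}|^2<\bar{\epsilon})\doteq\rho^{-1}$ (the two-sided exponential equality follows since $|\tilde{D}_1^0|^2$ is exponentially distributed, $\bar{\epsilon}\propto\rho^{-1}$, and the union bound supplies only the constant factor $M$), to obtain
\begin{align}
\mathrm{P}_{i^*,n}^{\text{LE}}\leq M^K\left[\mathrm{P}\left(|\tilde{D}_1^0|^2<\bar{\epsilon}\right)\right]^K\doteq\rho^{-K},
\end{align}
where $M^K$ is a constant that does not affect the exponential order. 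This yields a diversity order of at least $K$, and since each of the $K$ users offers diversity order exactly one, $K$ is the largest diversity attainable through scheduling over these users; hence \eqref{im selecy} realizes the maximal multi-user diversity gain $K$.

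The main obstacle is the factorization step, and the care it requires. Within a single user the gains $|\tilde{D}_i^l|^2$ over $l$ are statistically dependent, which is exactly why the random scheduling bound had to invoke a union bound instead of an exact product; the present argument gains nothing from that correlated structure. The diversity gain comes solely from the randomness \emph{across} users, where independence holds by the channel model assumption. The proof therefore hinges on applying independence only at the inter-user level, while retaining the intra-user union bound to control the minimum over $l$ inside each factor.
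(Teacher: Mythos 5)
Your proof is correct and follows exactly the route the paper indicates (the paper only sketches it, asserting the result follows from the users' channel independence and "steps similar to the proof of Lemma \ref{lemmax2}"): you reduce the outage event to $\{\max_i |\tilde{D}_i^{\min}|^2 < \bar{\epsilon}\}$, factorize across the $K$ independent users, and control each factor with the intra-user union bound and the exponential distribution of $|\tilde{D}_i^l|^2$. Your explicit care in applying independence only across users (not across the correlated subchannel gains within a user) is precisely the point the paper leaves implicit.
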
 
We note that the user scheduling strategy shown in \eqref{im selecy} is also useful for improving  the performance of  FD-DFE, as shown in Section \ref{section 7}.

\section{Uplink OTFS-NOMA Transmission }  \label{section 6}
The design of uplink OTFS-NOMA is similar to that of downlink OTFS-NOMA, and due to space limitations, we mainly focus on the difference between the two cases in this section.  Again consider that $\text{U}_0$ is grouped with $M$ NOMA users, selected from the $K$ available  users.  $\text{U}_0$'s $NM$ signals are placed in the delay-Doppler plane, denoted by $x_0[k,l]$, where $0\leq k \leq N-1$ and $0\leq l \leq M-1$. The corresponding  time-frequency signals, $X_0[n,m]$, are obtained by applying ISFFT to $x_0[k,l]$. On the other hand, the NOMA users' signals, $x_i(n)$, are mapped to   time-frequency signals, $X_i[n,m]$, according to \eqref{indooruser2}.

Following  steps similar to the ones for the downlink case, the    base station's  observations in the time-frequency plane are given by 
\begin{align}\label{TF1}
Y[n,m] =& \sum^{M}_{q=0}H_q[n,m]X_q[n,m] +W[n,m]\\\nonumber
=&\frac{H_0(n,m)}{NM}\sum^{N-1}_{k=0}\sum^{M-1}_{l=0}x_0[k,l]e^{j2\pi \left(\frac{kn}{N}-\frac{ml}{M}\right)} +\sum^{M}_{q=1}  H_q[n,m] X_q[n,m]+W[n,m],
\end{align}
where $W[n,m]$ is the Gaussian noise at the base station in the time-frequency plane. We assume that   all  users employ  the same transmit pulse as well as the same transmit power.   The base station  applies SIC to first detect the NOMA users' signals in the time-frequency plane, and then tries  to detect the high-mobility user's signals in the delay-Doppler plane, as shown in the following two subsections.  

\vspace{-1em}

\subsection{ Stage I of SIC}
The base station will first try to detect the NOMA users' signals in the time-frequency plane by treating the signals from $\text{U}_0$ as noise, which is the first stage of SIC. 

By using \eqref{indooruser2}, $ {x}_i(n) $ can be estimated as follows: 
 \begin{align}\label{TF2}
\hat{x}_i(n) =& \frac{Y[n, i-1]}{H_i[n,i-1]}  =  x_i[n]+ \frac{H_0[n,i-1]X_0[n,i-1] +W[n,i-1]}{H_i[n,i-1]}. 
\end{align}
Define an $NM\times 1$ vector, $\bar{\mathbf{x}}_0$, whose $(nM+m+1)$-th element is $X_0[n,m]$. Recall that  $X_0[n,m]$ is obtained from the ISFFT of $x_0[k,l]$, i.e.,
\begin{align}
\bar{\mathbf{x}}_0 =& (\mathbf{F}_N^H\otimes  \mathbf{F}_M) \mathbf{x}_0,
\end{align}
which means $X_0[n,m]$ follows the same distribution as $x_0[k,l]$.    By applying  steps similar to those in the proof for Lemma \ref{lemma1},  the SINR for detecting  $x_i(n)$ is given by
\begin{align}
\text{SINR}_{i,n} = \frac{\rho|H_i[n,i-1]|^2}{\rho |H_0[n,i-1]|^2 +1}.
\end{align}

Unlike downlink OTFS-NOMA, there are two possible strategies for uplink OTFS-NOMA to combat   multiple access interference, as shown in the following two subsections. 

\subsubsection{Adaptive-Rate Transmission}
One strategy to combat multiple access interference is to  impose the following constraint on $x_i(n)$:
\begin{align}\label{ri adaptive}
R_{i,n}\leq  \log\left(1+ \frac{\rho|H_i[n,i-1]|^2}{\rho |H_0[n,i-1]|^2 +1}\right),
\end{align}
which means that the first stage of SIC is guaranteed to  be successful. Therefore,  the impact of the NOMA users on $\text{U}_0$'s performance is minimized, i.e.,   the use of NOMA is transparent to $\text{U}_0$. 


Because $\text{U}_i$'s data rate is adaptive, outage events when decoding  $x_i(n)$  do not happen, which means that an appropriate criterion for the performance evaluation is the ergodic rate. Recall that $H_i[n,i-1] = \tilde{D}_i^{i-1}$ and $H_0[n,i-1]=D_0^{n,i-1}$. Therefore, $\text{U}_i$'s ergodic rate is given by
\begin{align} \label{ergodic x}
\mathcal{E}\{R_{i,n}\}\leq \mathcal{E}\left\{ \log\left(1+ \frac{\rho| \tilde{D}_i^{i-1}|^2}{\rho |D_0^{n,i-1}|^2 +1}\right)\right\}.
\end{align}

We note that the ergodic rate of uplink OTFS-NOMA can be further  improved by modifying  the user scheduling strategy proposed in   \eqref{im selecy}, as shown in the following. Particularly, denote 
  the NOMA user which is scheduled to transmit in the $m$-th frequency subchannel  by $\text{U}_{i_m^*}$, and this user  is selected by using the following criterion: 
\begin{align}\label{im selecy2}
i_m^* = \arg \underset{i\in\{1, \cdots, K\}}{\max} \left\{   | \tilde{D}_{i}^{m}|^2\right\}.
\end{align}
It is worth pointing out that a single user might be scheduled on multiple frequency channels, which reduces user fairness.  

Because   the integration of the logarithm function appearing  in \eqref{ergodic x} leads to    non-insightful special functions, we will use simulations   to evaluate the ergodic rate of OTFS-NOMA in Section \ref{section 7}. 
\subsubsection{Fixed-Rate Transmission} 
If the NOMA users do not have the capabilities to adapt their transmission rates, they have to use   fixed  data rates $R_i$ for transmission, which means that outage events can happen and the achieved outage performance is analyzed in the following.   For illustration purposes, we focus on the case when the user scheduling strategy shown in  \eqref{im selecy2} is used. 

The outage probability for  detecting  $x_{i_m^*}(n)$ is given by 
\begin{align}
 \mathrm{P}_{i_m^*,n} =\mathrm{P}\left( \log\left(1+\frac{\rho|\tilde{D}_{i^*_m}^{i^*_m-1}|^2}{\rho |D_0^{n,i^*_m-1}|^2 +1}\right)<R_{i_m^*}\right).
 \end{align}
 Following   steps similar to the ones  in the proof for Lemma \ref{lemmax2}, we can show  that  $|\tilde{D}_{i^*_m}^{i^*_m-1}|^2$ and $|D_0^{n,i^*_m-1}|^2$ are independent,  and the use of the user scheduling scheme in \eqref{im selecy2} simplifies  the outage probability   as follows:
\begin{align}
 \mathrm{P}_{i_m^*,n} =&\mathrm{P}\left( \log\left(1+\frac{\rho|\tilde{D}_{i^*_m}^{i^*_m-1}|^2}{\rho |D_0^{n,i^*_m-1}|^2 +1}\right)<R_{i_m^*}\right)
 =
\int^{\infty}_{0} \left(1-e^{-\frac{\epsilon_{i_m^*}(1+\rho y)}{\rho}}\right)^{K} e^{-y} dy,
 \end{align}
 where we use the fact that  the cumulative distribution  function of  $|\tilde{D}_{i^*_m}^{i^*_m-1}|^2$ is $  \left(1-e^{-x}\right)^{K}$ because of the adopted  user scheduling strategy. 

The outage probability can be further simplified as follows:
\begin{align}
 \mathrm{P}_{i_m^*,n}  =& 
\sum^{K}_{k=0}{K\choose k} (-1)^k \int^{\infty}_{0} e^{-\frac{k\epsilon_{i_m^*}(1+\rho y)}{\rho}-y}    dy 
 =
\sum^{K}_{k=0}{K\choose k} (-1)^k e^{-\frac{k\epsilon_{i_m^*}}{\rho}}   \frac{1}{k\epsilon_{i_m^*} +1}.     
 \end{align}
 
 At high SNR, the outage probability can be approximated as follows:  
\begin{align}\label{highx}
 \mathrm{P}_{i_m^*,n}  \approx& 
\sum^{K}_{k=0}{K\choose k} (-1)^k    \frac{1}{k\epsilon_{i_m^*} +1}    ,
 \end{align}
 which is no longer a function of $\rho$, i.e.,  the outage probability has an error floor at high SNR. This is due to the fact that $\text{U}_{i^*_m}$ is subject to strong interference from $\text{U}_0$.  
 
 However, we can show that  the error floor experienced by $\text{U}_{i^*_m}$ can be reduced by increasing $K$, i.e., inviting more opportunistic users for NOMA transmission. In particular, assuming  $K\epsilon_{i_m^*}\rightarrow 0$, the outage probability can be approximated as follows: 
 \begin{align}
 \mathrm{P}_{i_m^*,n}  \approx& 
\sum^{K}_{k=0}{K\choose k} (-1)^k   \left(1+k\epsilon_{i_m^*}\right)^{-1}
 \approx
\sum^{K}_{k=0}{K\choose k} (-1)^k   \sum^{\infty}_{l=0} (-1)^l k^l\epsilon_{i_m^*}^l,
 \end{align}
 where we use the fact that $(1+x)^{-1}= \sum^{\infty}_{l=0} (-1)^lx^l$, $|x|<1$. Therefore, the error floor at high SNR can be approximated as follows:
  \begin{align}
 \mathrm{P}_{i_m^*,n}    \approx& 
\sum^{\infty}_{l=0} (-1)^l \epsilon_{i_m^*}^l\sum^{K}_{k=0}{K\choose k} (-1)^k   k^l 
\approx
  (-1)^K \epsilon_{i_m^*}^K(-1)^KK! = K! \epsilon_{i_m^*}^K,
 \end{align}
 where we use the identities     
 $
 \sum^{K}_{k=0}{K\choose k} (-1)^k   k^l = 0$, 
 for $l<K$ and $
 \sum^{K}_{k=0}{K\choose k} (-1)^k   k^K = (-1)^KK! $.

The conclusion that increasing $K$   reduces the error floor can be confirmed     by defining  $f(k) = k! \epsilon_{i_m^*}^k$ and using the following fact:
 \begin{align}\label{deiff}
 f(k) - f(k+1) =  k! \epsilon_{i_m^*}^k\left(1- (k+1) \epsilon_{i_m^*}\right)>0,
 \end{align}
 where it is assumed that $k\epsilon_{i_m^*}\rightarrow 0$.

 \vspace{-1em}
 
\subsection{Stage II of SIC}
If adaptive transmission is used, the NOMA users' signals can be detected successfully during the first stage of SIC. Therefore,  they can be removed from the observations at the base station, i.e.,  $\bar{Y} [n,m]= Y[n,m] - \sum^{N}_{q=1} H_q(n,m) X_q[n,m]$, and   SFFT is applied to obtain the delay-Doppler observations  as follows:
\begin{align}
y_0[k,l] =&\frac{1}{NM}\sum^{N-1}_{n=0}\sum^{M-1}_{m=0} \bar{Y}[n,m] e^{-j2\pi \left(\frac{nk}{N}-\frac{ml}{M}\right)} 
=\sum^{P_0}_{p=1} h_{0,p} x_0[(k-k_{\mu_{0,p}})_N, (l-l_{\tau_{0,p}})_M]   +z[k,l], 
\end{align}
where $z[k,l]$ denote additive noise.  $\text{U}_0$'s signals can be detected by applying either of the    two considered equalization approaches,  and the same performance as for  OTFS-OMA can be realized.  The analytical development is similar to the downlink case, and hence is omitted due to space limitations. 

However, if fixed-rate transmission  is used, the uplink outage events for decoding $x_0[k,l]$  are different from the downlink ones, as shown in the following. Particularly, the use of FD-LE yields the following 	SINR expression  for decoding $x_0[k,l]$:
\begin{align}
\text{SINR}_{0,kl}^{\text{LE}} = \frac{\rho }{ \frac{1}{NM}\sum^{N-1}_{k=0}\sum^{M-1}_{l=0} |D_0^{k,l}|^{-2}}.
\end{align}
If FD-DFE is used, the SNR for detection of $x_0[k,l]$ is given by
\begin{align} 
\text{SINR}_{0,kl}^{\text{DFE}} =  \rho   \lambda_{0,kl}.
\end{align}
Therefore, the outage probability for detecting $x_0[k,l]$ is given by 
\begin{align}\nonumber
\mathrm{P}_{kl} = &1 -  \mathrm{P}\left(  \text{SINR}^{\text{DFE/LE}}_{0,kl} >\epsilon_0, \text{SNR}_{i,n}  >\epsilon_i \forall i,n\right)\\\nonumber
\geq  &   1 -  \mathrm{P}\left(   \text{SNR}_{i,n} >\epsilon_i \forall i,n\right)\geq    \mathrm{P}\left(   \text{SNR}_{1,0} <\epsilon_i \right).
\end{align} 
Since $ \mathrm{P}\left(   \text{SNR}_{1,0} <\epsilon_i \right)$ has an error floor as shown in the previous subsection,     the uplink outage probability for detection of  $\text{U}_0$'s signals  does not go to zero even if $\rho\rightarrow \infty$, which is different from the downlink case.
\vspace{-1em}

\section{Numerical Studies} \label{section 7}
\begin{table}[t]
\caption{Delay-Doppler Profile for $\text{U}_0$'s channel} 
\centering 
\vspace{-1em}
\begin{tabular}{l c c c c } 
\hline\hline 
Propagation path index ($p$) & 0& 1 & 2 &3  \\ [0.5ex] 
\hline 
Delay ($\tau_{0,p}$) $\mu s$ & 8.33 & 25 & 41.67 &58.33 \\ 
Delay tap index ($l_{\tau_{0,p}}$ ) & 2 & 6 & 10 &14  \\
Doppler ($\nu_{0,p}$) Hz  & 0 & 0 & 468.8  &468.8   \\
Doppler tap index ($k_{\nu_{0,p}}$  ) & 0 & 0& 1  &1\\
\hline 
\end{tabular}
\label{table:nonlin} 
\vspace{-2em}
\end{table}

In this section, the performance of OTFS-NOMA is evaluated via computer simulations. Similar to \cite{8503182, 8516353,8424569}, we first define   the delay-Doppler profile for $\text{U}_0$'s channel as shown in  Table \ref{table:nonlin}, where   $P_0=3$ and the subchannel spacing is $\Delta f = 7.5$ kHz.  Therefore, the  maximal speed corresponding to the largest Doppler shift $\nu_{0,3}=468.8 $ Hz is $126.6$ km/h if the carrier frequency is $f_c=4$ GHz.  On the other hand,  the NOMA users' channels are assumed to be time invariant with  $P_i=3$ propagation paths, i.e., $\tau_{i,p}=0$ for $p\geq 4$, $i\geq 1$.  For all the users' channels, we assume that $\sum^{P_i}_{p=0}\mathcal{E}\{|h_{i,p}|^2\}=1$ and  $ |h_{i,p}|^2 \sim CN\left(0,\frac{1}{P_i+1}\right)$.

\begin{figure}[!htp] \vspace{-1em}
\begin{center}\subfigure[$R_0=1$ BPCU and $R_i=1.5$ BPCU]{\label{fig1b}\includegraphics[width=0.43\textwidth]{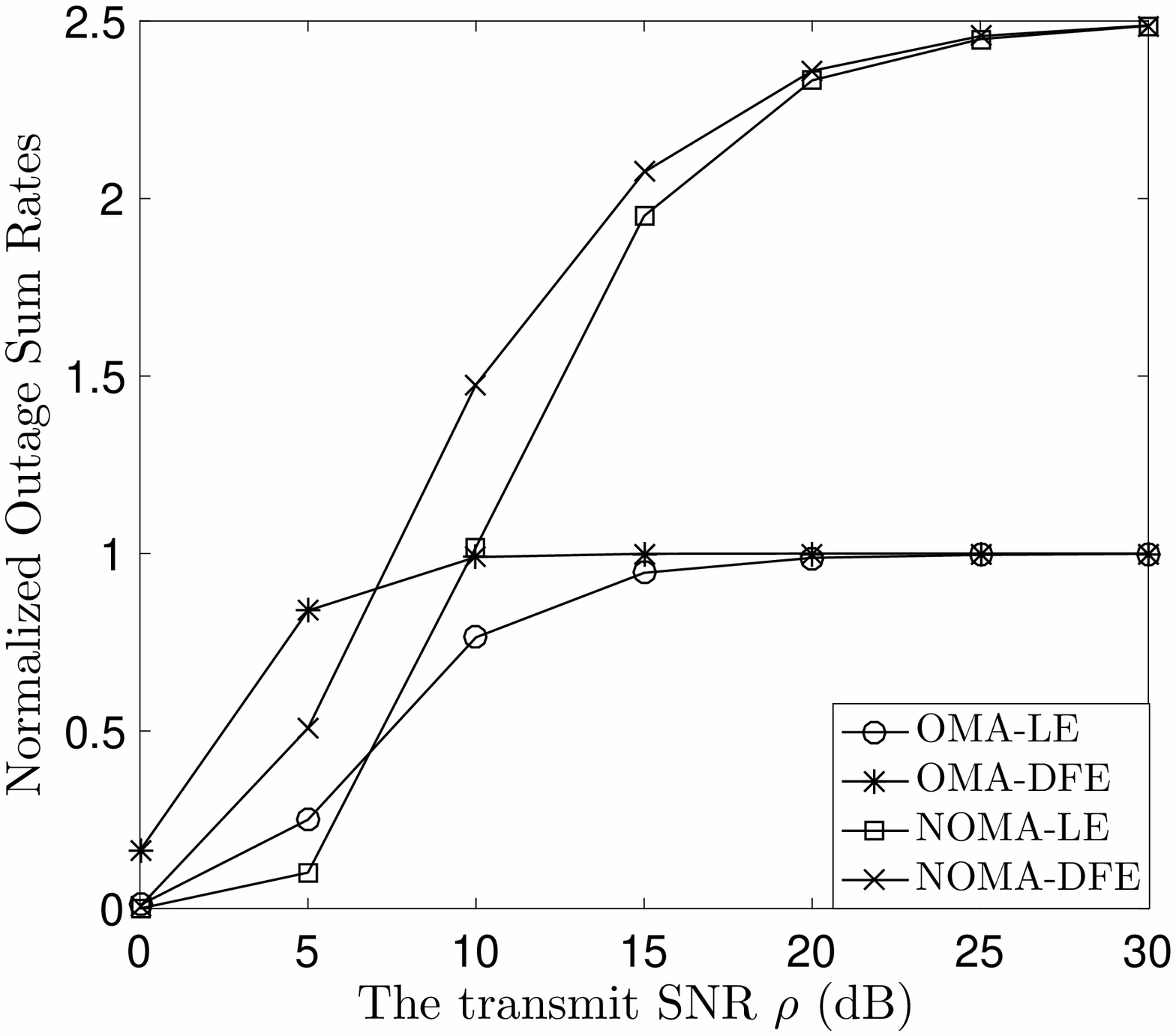}}
\subfigure[$R_0=0.5$ BPCU and $R_i=1$ BPCU]{\label{fig1a}\includegraphics[width=0.43\textwidth]{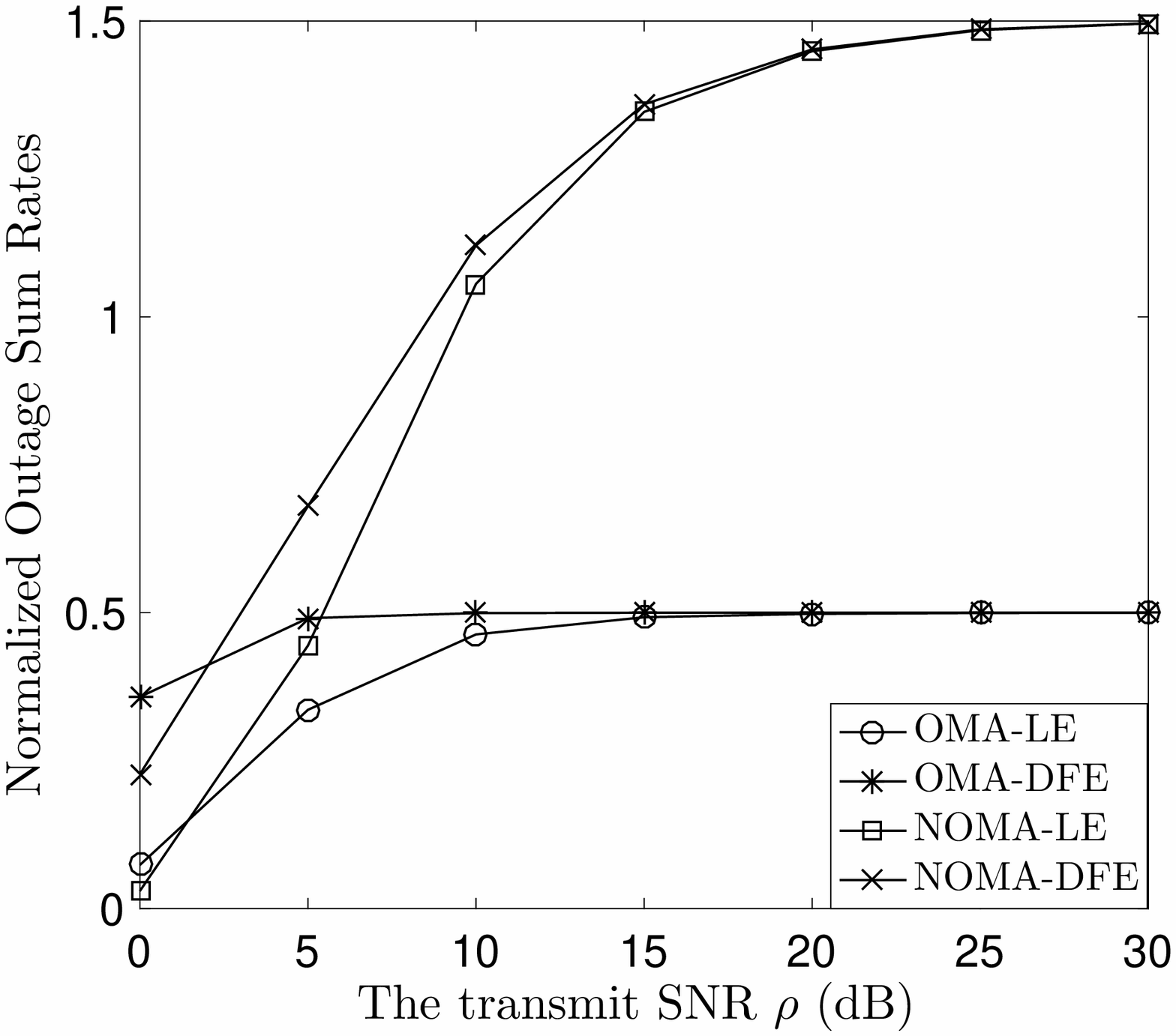}} 
\end{center}\vspace{-1em}
 \caption{Impact of OTFS-NOMA on the downlink sum rates. $M=N=K=16$. $P_0=P_i=3$. BPCU denotes bit per channel use.   $\gamma_0^2=\frac{3}{4}$ and $\gamma_i^2=\frac{1}{4}$ for $i>0$.  Random user scheduling is used.    }\label{fig 1}\vspace{-2em}
\end{figure}

In Fig. \ref{fig 1},  downlink OTFS-NOMA transmission is evaluated   by using the normalized outage sum rate  as the performance criterion   which is  defined as $\frac{1}{NM}\sum^{N-1}_{k=0}\sum^{M-1}_{l=0}(1-\mathrm{P}_{0,kl})R_0$ and $\frac{1}{NM}\sum^{N-1}_{k=0}\sum^{M-1}_{l=0}(1-\mathrm{P}_{0,kl})R_0+\frac{1}{NM}\sum^{M}_{i=1}\sum^{N-1}_{n=0}(1-\mathrm{P}_{i,n})R_i$ for OTFS-OMA and OTFS-NOMA, respectively.   Fig.~\ref{fig 1} shows that the use of OTFS-NOMA can significantly improve the sum rate at high SNR for both considered  choices of $R_0$ and $R_i$. The reason for this performance gain is   the fact that    the maximal sum rate achieved by OTFS-OMA is capped by   $R_0$, whereas  OTFS-NOMA can provide sum rates up to  $R_0+R_i$.  Comparing Fig. \ref{fig1a} to Fig. \ref{fig1b}, one can observe  that  the performance loss of OTFS-NOMA at low SNR can be mitigated by reducing the target data rates, since reducing the target rates improves the   probability of successful SIC. Furthermore, both   figures  show that FD-DFE outperforms FD-LE in the entire considered range of SNRs; however, we note that the performance gain of FD-DFE over FD-LE is achieved at the expense  of increased computational complexity. 

\begin{figure}[!htp] \vspace{-1em}
\begin{center}\subfigure[Outage probabilities of $\text{U}_0$ and the NOMA users]{\label{fig2a}\includegraphics[width=0.43\textwidth]{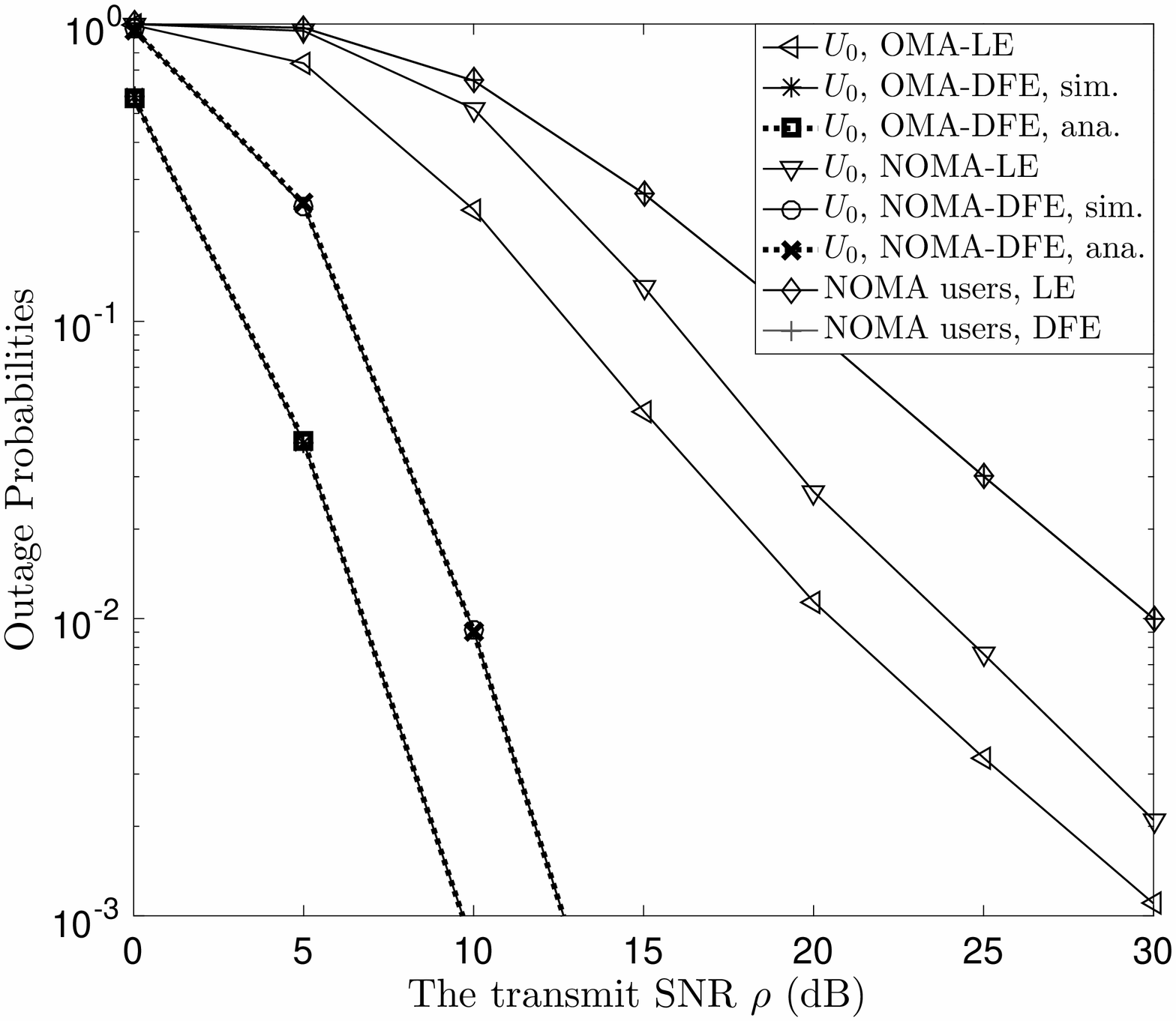}} \subfigure[Performance  of FD-DFE]{\label{fig2b}\includegraphics[width=0.43\textwidth]{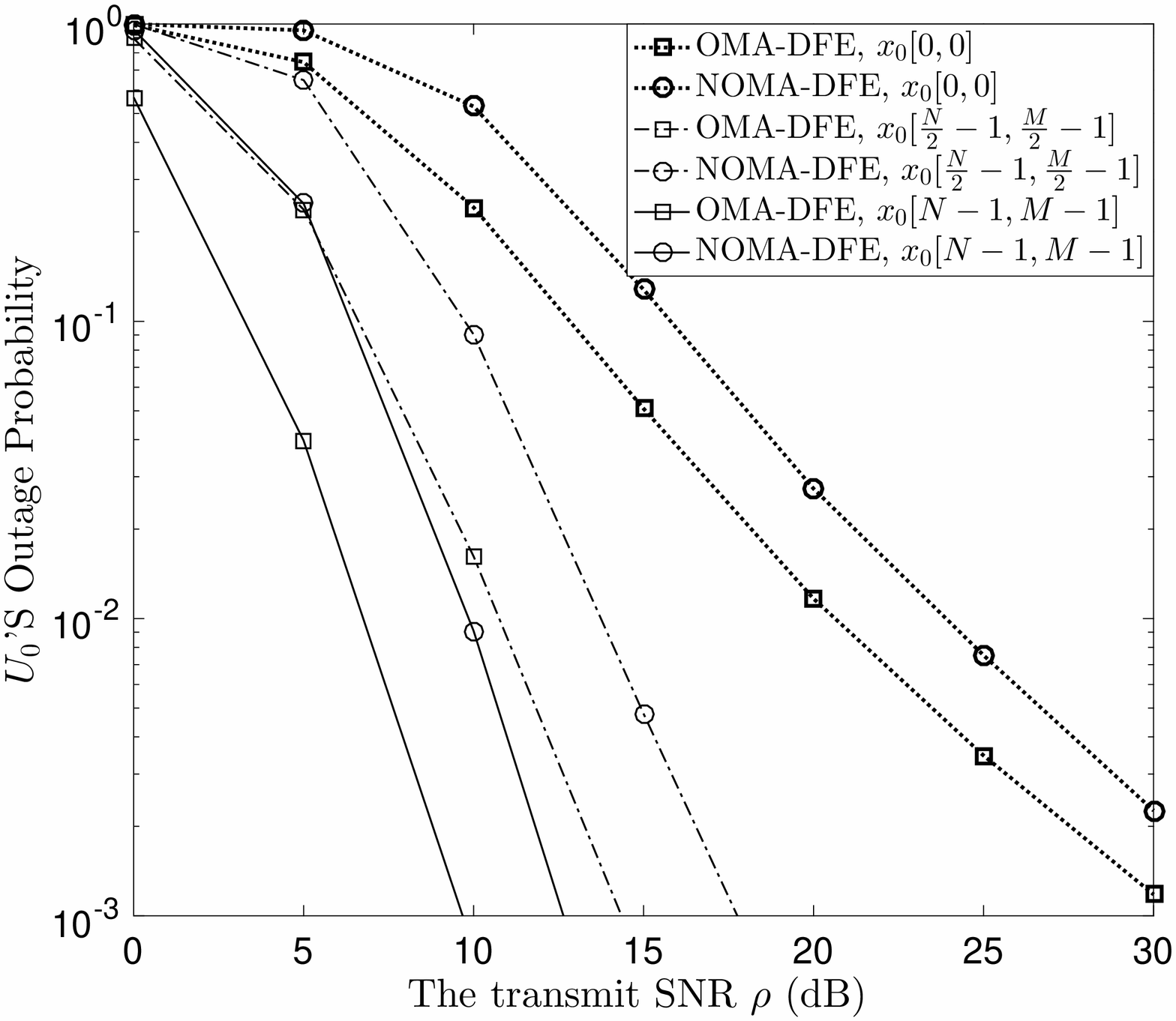}}
\end{center}\vspace{-1em}
 \caption{The outage performance of downlink OTFS-OMA and OTFS-NOMA.  $M=N=K=16$. $P_0=P_i=3$.     $\gamma_0^2=\frac{3}{4}$ and $\gamma_i^2=\frac{1}{4}$ for $i>0$.  $R_0=0.5$ BPCU and $R_i=1$ BPCU. In Fig. \ref{fig2a}, for   FD-DFE, the performance of $x_0[N-1,M-1]$ is shown.   Random user scheduling is used.    }\label{fig2}\vspace{-2em}
\end{figure}
 
In Fig. \ref{fig2}, the outage probabilities achieved by downlink OTFS-OMA and OTFS-NOMA are shown. As can be seen from Fig. \ref{fig2a}, the diversity order achieved with FD-LE for detection of $x_0[k,l]$  is one, as expected from   Lemma \ref{lemmax2}. As discussed in Section \ref{subsection fd-dfe}, one advantage of FD-DFE over FD-LE is that  FD-DFE facilitates   multi-path fading diversity gains, whereas FD-LE is limited to a diversity gain of  one. This conclusion is confirmed  by Fig. \ref{fig2a}, where the analytical results developed in Corollary \ref{lemma3} are also verified.    Fig. \ref{fig2b} shows   the outage probabilities achieved by FD-DFE for different $x_0[k,l]$. As shown in the figure, the lowest outage probability is obtained for $x_0[N-1,M-1]$, whereas the outage probability  of $x_0[0,0]$ is the largest, which is due to the fact that, in FD-DFE,   different signals  $x_0[k,l]$ are affected by  different effective channel gains, $\lambda_{0,kl}$. Another important observation from the figures is that the FD-LE outage probability  is   the same as the FD-DFE outage probability for detection of $x_0[0,0]$, which fits the intuition that for FD-DFE the reliability  of  the first decision ($x_0[0,0]$) is the same as that of FD-LE.  For the same reason,    FD-LE and FD-DFE yield   similar performance for detection of   the NOMA users' signals, since the FD-DFE outage performance is dominated by the reliability for   detection of $x_0[0,0]$,  and hence is the same as that of FD-LE.  

In addition to multi-path diversity, another degree of freedom available  in the considered OTFS-NOMA downlink scenario is multi-user diversity, which can be harvested by applying user scheduling as discussed in Section \ref{subsection 2}.  Fig. \ref{fig4} demonstrates the benefits of exploiting  multi-user diversity. With random user scheduling, at low SNR, the performance of OTFS-NOMA is worse than that of OTFS-OMA, which is   also consistent with Fig. \ref{fig 1}. By increasing the number of   users participating in OTFS-NOMA, the performance of OTFS-NOMA can be   improved, particularly at low and moderate SNR. For example, for   FD-LE,   the performance of  OTFS-NOMA  approaches  that of OTFS-OMA at low SNR by exploiting   multi-user diversity, and for   FD-DFE, an extra gain of $0.5$ BPCU can be achieved at moderate SNR.

\begin{figure}[!htb]
    \centering
    \begin{minipage}{0.49\textwidth}
        \centering
   \epsfig{file=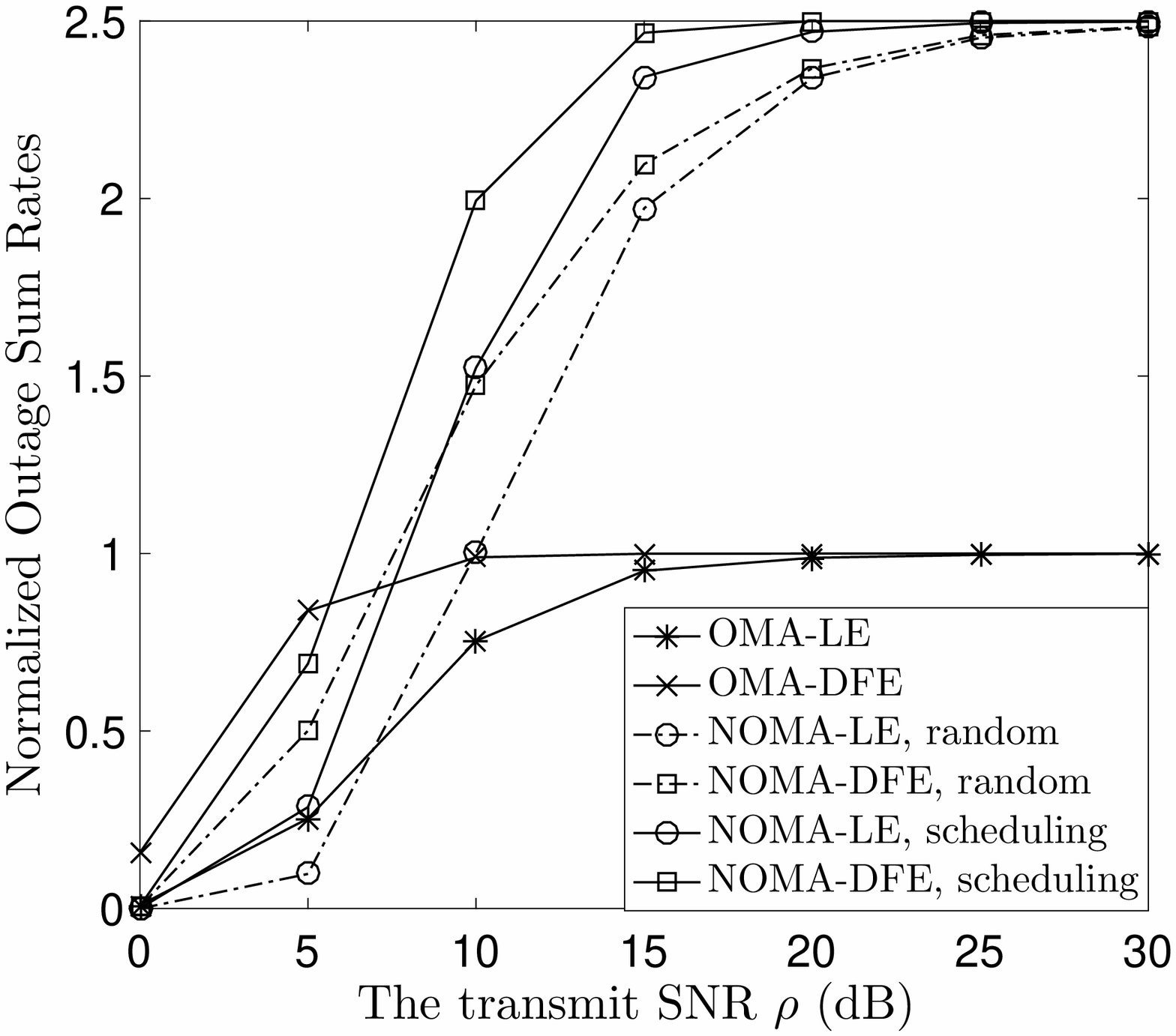, width=0.9\textwidth, clip=}\vspace{-1em}
\caption{ Impact of user scheduling   on the downlink outage sum rates. $P_0=P_i=3$. $R_0=1$  BPCU and $R_i=1.5$ BPCU. $M=N=K=16$, $\gamma_0^2=\frac{3}{4}$ and $\gamma_i^2=\frac{1}{4}$ for $i>0$.     }\label{fig4}
    \end{minipage}
    \hfill
    \begin{minipage}{0.49\textwidth}
        \centering
         \includegraphics[width=.9\textwidth]{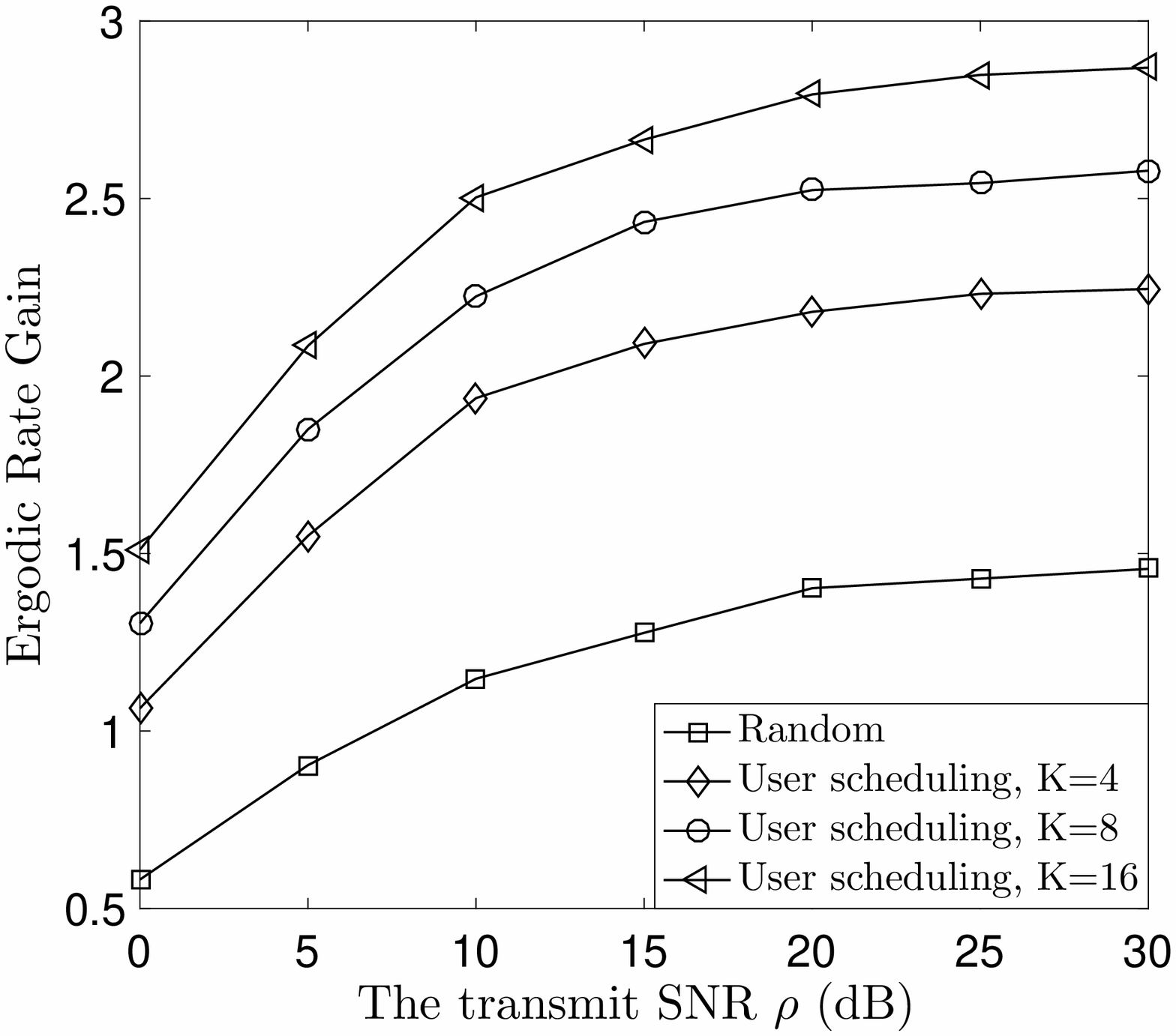} \vspace{-1em}
\caption{ The ergodic rate gain of OTFS-NOMA over OTFS-OMA.  The NOMA users adapt their data rates according to \eqref{ri adaptive}.  $P_0=P_i=3$.  $M=N=16$.    }\label{fig5}
    \end{minipage}\vspace{-1.5em} 
\end{figure}

\begin{figure}[!htp] 
\begin{center}\subfigure[Outage Sum Rate ($R_i=1$ BPCU)]{\label{fig6a}\includegraphics[width=0.43\textwidth]{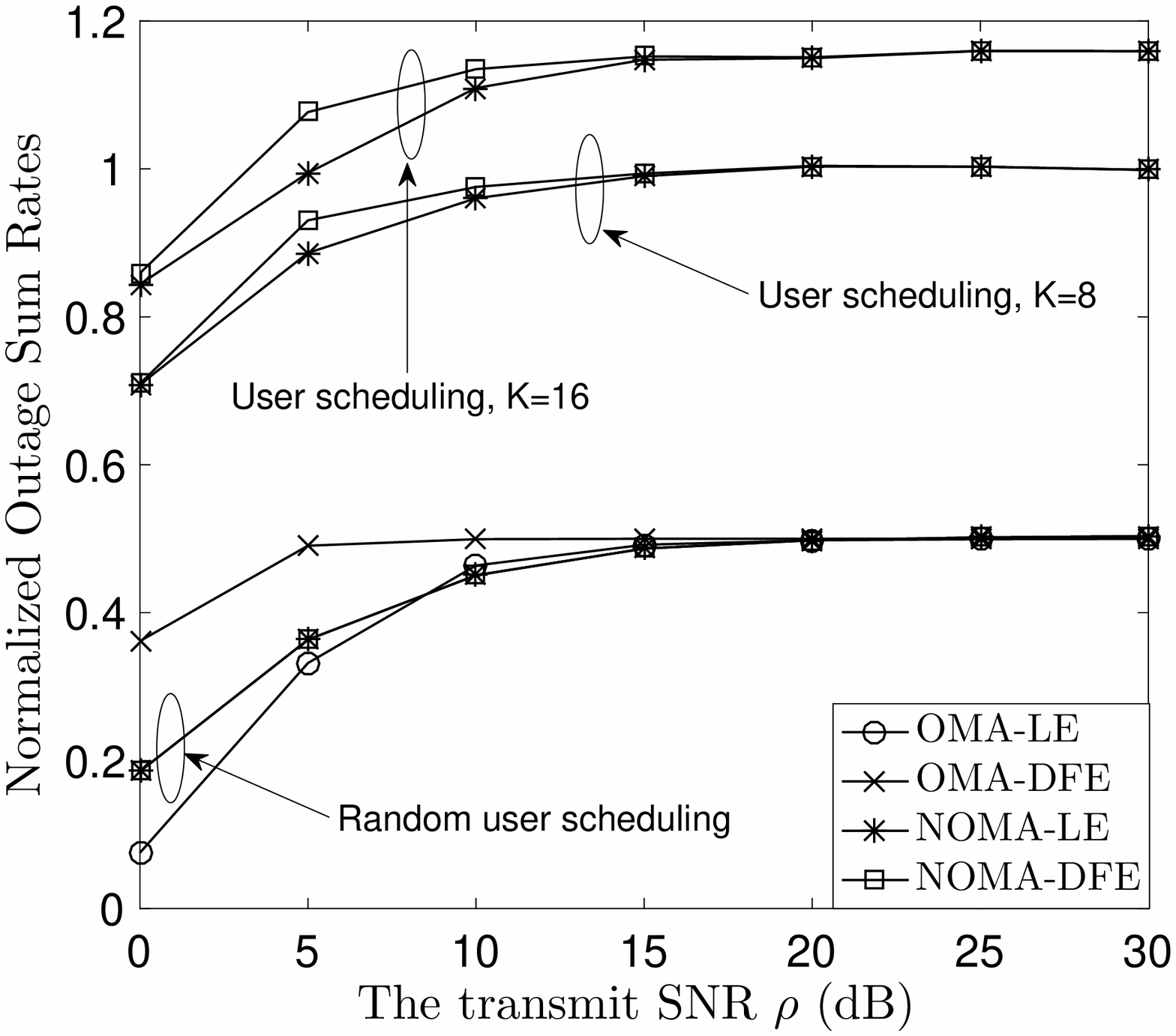}} \subfigure[ Outage Probability]{\label{fig6b}\includegraphics[width=0.43\textwidth]{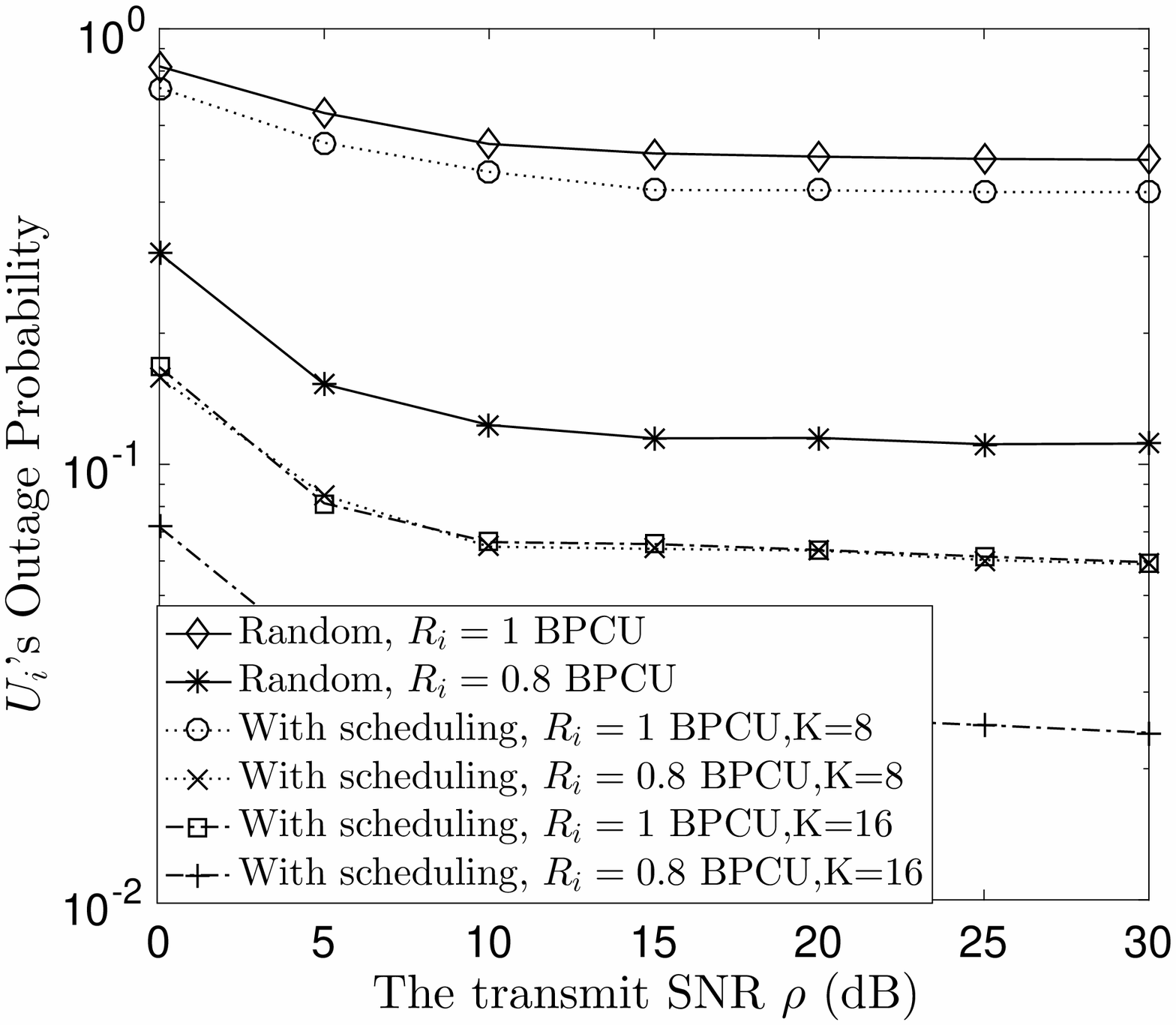}}
\end{center}\vspace{-1em}
 \caption{The performance of uplink   OTFS-NOMA. Fixed-rate transmission is used by the NOMA users. $M=N=16$. $P_0=P_i=3$. $R_0=0.5$  BPCU. $\gamma_0^2=\frac{3}{4}$ and $\gamma_i^2=\frac{1}{4}$ for $i>0$.      }\label{fig6}\vspace{-2em}
\end{figure}

In Figs. \ref{fig5} and \ref{fig6}, the performance of uplink OTFS-NOMA is evaluated. As discussed in Section \ref{section 6}, the NOMA users have two choices for their transmission   rates, namely adaptive and fixed   rate transmission. The use of adaptive rate transmission can ensure that the implementation of NOMA is transparent to $\text{U}_0$, which means that $\text{U}_0$'s QoS requirements  are strictly guaranteed. Since $\text{U}_0$ achieves  the same performance for OTFS-NOMA and OTFS-OMA when adaptive rate transmission  is used, we only focus on the NOMA users' performance, where the ergodic rate   in \eqref{ergodic x} is used as the criterion. We note that this ergodic rate is the net performance gain of OTFS-NOMA over OTFS-OMA, which is the reason why the vertical axis  in Fig. \ref{fig5} is labeled  `Ergodic Rate Gain'. When the $M$ users are randomly selected from the $K$ NOMA users, the ergodic rate gain is moderate, e.g., $1.5$ bit per channel use (BPCU) at $\rho=30$ dB. By applying the   scheduling strategy proposed in \eqref{im selecy2}, the ergodic rate gain can be significantly improved, e.g., nearly by a factor of two compared to the random case   with $K=16$ and $\rho=30$ dB.

Fig. \ref{fig6} focuses on the case with fixed   rate transmission, and similar to Fig. \ref{fig 1}, the normalized outage sum rate is used as   performance criterion in Fig. \ref{fig6a}. One can observe that with random user scheduling, the sum rate of OTFS-NOMA is similar to that of  OTFS-OMA. This is due to the fact that no interference mitigation strategy, such as power or rate allocation,  is used for    NOMA uplink transmission, which means that $\text{U}_0$ and the NOMA users cause strong  interference to each other and SIC failure can happen frequently.  By applying the user scheduling strategy proposed in \eqref{im selecy2}, the   channel conditions of the scheduled users  become quite different, which facilitates the implementation of SIC. This benefit of  user scheduling can be clearly observed  in Fig. \ref{fig6a}, where NOMA achieves   a significant gain over OMA  although advanced power or rate allocation strategies  are not used.    Fig. \ref{fig6a} also shows that the difference between the performance of FD-LE and FD-DFE is insignificant for the uplink case. This is due to the fact that the   outage events during the first stage of SIC dominate the outage performance, and they are not affected by whether  FD-LE or FD-DFE is employed.  Another important observation from Fig. \ref{fig6a} is that the maximal sum rate $R_0+R_i$ cannot be realized, even at high SNR. The reason for this behaviour   is   the existence of the error floor for the NOMA users' outage probabilities, as shown in Fig. \ref{fig6b}. The analytical results provided in   Section \ref{subsection 2} show that increasing $K$ can reduce the error floor, which is   confirmed by Fig. \ref{fig6b}. 

\vspace{-1em}

\section{Conclusions}
In this paper, we have proposed    OTFS-NOMA uplink and downlink transmission schemes, where users with different mobility profiles are grouped together for the implementation of NOMA.  The   analytical results developed in the paper demonstrate that both the high-mobility and low-mobility users benefit from the application  of OTFS-NOMA. In particular, the use of NOMA enables the spreading of the signals of a high-mobility user over a large amount of time-frequency resources, which enhances the OTFS resolution and improves the detection reliability.  In addition, OTFS-NOMA ensures that the low-mobility users have access to the bandwidth resources which would be solely occupied by the high-mobility users in OTFS-OMA. Hence, OTFS-NOMA improves the spectral efficiency and reduces latency.     As shown in the paper, the effective channel gains experienced by  different symbols are different if FD-DFE is employed, which suggests  that    data rate allocation policies  can have a significant impact on the performance of OTFS-NOMA. Therefore,  the design of such policies is an important topic for future research. Another interesting topic for future works is studying the impact of non-zero fractional delays  and fractional Doppler shifts on the performance of the developed OTFS-NOMA protocol. \vspace{-2em}
\appendices
\section{Proof for Proposition \ref{proposition1}} \label{proof1}
Intuitively, the use of $\mathbf{F}_N\otimes \mathbf{F}_M^H$ is analogous to the application of  the  ISFFT which transforms  signals from the delay-Doppler plane to the time-frequency plane, where inter-symbol interference is removed, i.e., the user's channel matrix  is diagonalized. The following proof  confirms this intuition and reveals how the diagonalized channel matrix is related to the original block circulant matrix. We first apply $\mathbf{F}_N\otimes  \mathbf{I}_M$ to $\mathbf{y}_0 $, which yields the following: 
\begin{align}\label{yx model3}
&(\mathbf{F}_N\otimes  \mathbf{I}_M)\mathbf{y}_0= (\mathbf{F}_N\otimes  \mathbf{I}_M)\mathbf{H}_0\left(\gamma_0\mathbf{x}_0 +\sum^{M}_{q=1}\gamma_q {\mathbf{x}}_q \right) + (\mathbf{F}_N\otimes  \mathbf{I}_M)\mathbf{z}_0 \\\nonumber
=&  \diag \left\{\sum_{n=0}^{N-1}  \mathbf{A}_{0,n} e^{-j\frac{2\pi ln}{N}}, 0\leq l \leq N-1\right\}
  (\mathbf{F}_N\otimes  \mathbf{I}_M )\left(\gamma_0\mathbf{x}_0 +\sum^{M}_{q=1}\gamma_q {\mathbf{x}}_q \right)+(\mathbf{F}_N\otimes  \mathbf{I}_M)\mathbf{z}_0 ,
\end{align}
where $\diag \{\mathbf{B}_1, \cdots,  \mathbf{B}_N\}$ denotes a block-diagonal matrix with  $\mathbf{B}_n$, $1\leq n \leq N$, on its main diagonal.  
Note that $ \sum_{n=0}^{N-1}  \mathbf{A}_{0,n} e^{-j\frac{2\pi ln}{N}}$, $0\leq l \leq N-1$, is a sum of $N$ $M\times M$ circulant matrices, each of which can be further diagonalized by $\mathbf{F}_M$. Therefore, we can apply $\mathbf{I}_N \otimes \mathbf{F}_M^H$ to $(\mathbf{F}_N\otimes  \mathbf{I}_M)\mathbf{y}_0$,  which yields the following: 
\begin{align}\label{yx model4}
(\mathbf{I}_N \otimes \mathbf{F}_M^H)( \mathbf{F}_N\otimes  \mathbf{I}_M)\mathbf{y}_0  &= \diag \left\{\sum_{n=0}^{N-1}  \mathbf{\Lambda}_{0,n} e^{-j\frac{2\pi ln}{N}}, 0\leq l \leq N-1\right\}
\\\nonumber &\times (\mathbf{F}_N\otimes  \mathbf{I}_M)( \mathbf{I}_N \otimes \mathbf{F}_M ^H)\left(\gamma_0\mathbf{x}_0 +\sum^{M}_{q=1}\gamma_q {\mathbf{x}}_q \right)  +(\mathbf{I}_N \otimes \mathbf{F}_M^H)( \mathbf{F}_N\otimes  \mathbf{I}_M)\mathbf{z}_0 ,
\end{align}
where $\mathbf{\Lambda}_{0,n}$ is a diagonal matrix, $\mathbf{\Lambda}_{0,n}=\diag\left\{ \sum_{m=0}^{M-1} a^{m,1}_{0,n}e^{j\frac{2\pi tm}{M}}, 0\leq t\leq M-1\right\}$, and $a^{m,1}_{0,n}$ is the element located in the $m$-th row and first column of $\mathbf{A}_{0,n}$. 

By applying  a property of the Kronecker product, $(\mathbf{A} \otimes \mathbf{B})( \mathbf{C}\otimes  \mathbf{D})= (\mathbf{A} \mathbf{C}) \otimes (\mathbf{B}  \mathbf{D}) $, the received signals  can be simplified as follows:
\begin{align}\label{yx model5}
&(\mathbf{F}_N \otimes \mathbf{F}_M ^H ) \mathbf{y}_0 \\\nonumber= &\underset{\mathbf{D_0}}{\underbrace{\diag \left\{\sum_{n=0}^{N-1}  \mathbf{\Lambda}_{0,n} e^{-j\frac{2\pi ln}{N}}, 0\leq l \leq N-1\right\}}}
  (\mathbf{F}_N\otimes \mathbf{F}_M^H )\left(\gamma_0\mathbf{x}_0 +\sum^{M}_{q=1}\gamma_q {\mathbf{x}}_q \right) +(\mathbf{F}_N\otimes \mathbf{F}_M^H )\mathbf{z}_0 ,
\end{align}
where the $(kM+l+1)$-th element on the main diagonal of $\mathbf{D}_0$ is $D_0^{k,l}$ as defined in the proposition.   The proof for the proposition is complete. 
\vspace{-1em}
\section{Proof for Lemma \ref{lemma1}} \label{prooflemma1}
In order to facilitate the SINR analysis, the system model in \eqref{approach 1 system model 1} is further simplified. 
Define $\tilde{X}[n,m] = \sum^{M}_{i=1}X_i[n,m]$. With the mapping scheme used in \eqref{indooruser2}, the NOMA users' signals are interleaved and orthogonally placed in the time-frequency plane, i.e., $\tilde{X}[n,m]$ is simply $\text{U}_{m+1}$'s $n$-th signal, $x_{m+1}(n)$. 
Denote   the outcome of the SFFT of $\tilde{X}[n,m]$ by $\tilde{x}[k,l]$, which yields the following transform: 
\begin{align}\label{lemmaxxx1}
 \tilde{x}[k,l] =\frac{1}{\sqrt{NM}}  \sum^{N-1}_{n=0} \sum^{M-1}_{m=0}\tilde{X}[n,m] e^{-j2\pi \left(\frac{nk}{N}-\frac{ml}{M}\right)}.
\end{align}
Denote    the $NM\times 1$ vector collecting the $\tilde{x}[k,l]$ by $\tilde{\mathbf{x}}$ and    the $NM\times 1$ vector collecting the $\tilde{X}[n,m]$ by $\breve{\mathbf{x}}$, which means that \eqref{lemmaxxx1} can be rewritten as follows:  
\begin{align}
\tilde{\mathbf{x}} =( \mathbf{F}_N \otimes \mathbf{F}_M^H )\breve{\mathbf{x}}.
\end{align}
Therefore, the   model for the received signals  in \eqref{approach 1 system model 1}  can be re-written as follows: 
\begin{align}\label{approach 1 system model 2}
\breve{\mathbf{y}}_0
 = &  \gamma_0\mathbf{x}_0 + \gamma_1 \tilde{\mathbf{x}}  +\left( \mathbf{F}_N \otimes \mathbf{F}_M^H\right)^{-1}\mathbf{D}_0 ^{-1}\tilde{\mathbf{z}}_i\\\nonumber
  = &  \gamma_0\mathbf{x}_0 +\underset{\text{Interference and noise terms}}{\underbrace{\gamma_1(\mathbf{F}_N \otimes \mathbf{F}_M^H) \breve{\mathbf{x}} +\left( \mathbf{F}_N \otimes \mathbf{F}_M^H\right)^{-1}\mathbf{D}_0 ^{-1}\tilde{\mathbf{z}}_0}},
\end{align} 
where we have used the assumption that  $\gamma_i=\gamma_1$, for $1\leq i \leq N$.  
 Note that the power of the information-bearing signals is simply $\gamma_0^2 \rho$, and therefore, the key step to obtain   the SINR is to find the covariance matrix of the interference-plus-noise term. 
 
 We first show that $\tilde{\mathbf{z}}_0\triangleq (\mathbf{F}_N\otimes \mathbf{F}_M^H) \mathbf{z}_0$ is still a complex Gaussian vector, i.e., $\tilde{\mathbf{z}}_i\sim CN(0, \mathbf{I}_{NM})$. Recall that  $\mathbf{z}_0$ contains $N  M$ i.i.d. complex Gaussian random variables. Furthermore,   $\mathbf{F}_N\otimes \mathbf{F}_M^H  $ is a unitary matrix as shown in the following:
\begin{align}
  (\mathbf{F}_N\otimes \mathbf{F}_M^H)  (\mathbf{F}_N\otimes \mathbf{F}_M^H)^H   \overset{(a)} {=}(\mathbf{F}_N\otimes \mathbf{F}_M^H   )(\mathbf{F}_N^H\otimes \mathbf{F}_M )  \overset{(b)}{=}(\mathbf{F}_N\mathbf{F}_N^H)\otimes( \mathbf{F}_M^H  \mathbf{F}_M )=\mathbf{I}_{NM},
\end{align}
where step (a) follows from the fact that $(\mathbf{A}\otimes \mathbf{B})^H=\mathbf{A}^H\otimes \mathbf{B}^H$ and  step (b) follows from the fact that $(\mathbf{A}\otimes \mathbf{B})(\mathbf{C}\otimes \mathbf{D})=(\mathbf{A}\mathbf{C})\otimes (\mathbf{B}\mathbf{D})$. Therefore, $(\mathbf{F}_N\otimes \mathbf{F}_M^H) \mathbf{z}_0\sim CN(0, \mathbf{I}_{NM})$ given the fact that $\mathbf{z}_0\sim CN(0, \mathbf{I}_{NM})$  and  a unitary transformation of a Gaussian vector is still a Gaussian vector. 

Therefore, the covariance matrix of the interference-plus-noise term is given by
\begin{align}\label{approach 1 system model 3}
\mathbf{C}_{\text{cov}}
 = & \gamma_1^2 \mathcal{E}\left\{(\mathbf{F}_N \otimes \mathbf{F}_M^H )\breve{\mathbf{x}} \breve{\mathbf{x}}^H \left(\mathbf{F}_N \otimes \mathbf{F}_M^H\right)^H
 \right\} \\\nonumber &+ \mathcal{E}\left\{\left( \mathbf{F}_N \otimes \mathbf{F}_M^H\right)^{-1}\mathbf{D}_0 ^{-1}\tilde{\mathbf{z}}_0 \tilde{\mathbf{z}}_0^H\mathbf{D}_0 ^{-H}\left( \mathbf{F}_N \otimes \mathbf{F}_M^H\right)^{-H}\right\}.
 \end{align}
 
 Recall that the $(nM+m+1)$-th element of $\breve{\mathbf{x}}$ is $\tilde{X}[n,m]$ which is equal to $x_{m+1}(n)$. Therefore, the   covariance matrix can be further simplified as follows: 
 \begin{align}\label{approach 1 system model 4}
\mathbf{C}_{\text{cov}}
 =& \gamma_1^2  \rho( \mathbf{F}_N \otimes \mathbf{F}_M^H ) \left(\mathbf{F}_N \otimes \mathbf{F}_M^H\right)^H +  \left( \mathbf{F}_N \otimes \mathbf{F}_M^H\right)^{-1}\mathbf{D}_0 ^{-1} \mathbf{D}_0 ^{-H}\left( \mathbf{F}_N \otimes \mathbf{F}_M^H\right)^{-H} \\\nonumber
  =& \gamma_1^2  \rho \mathbf{I}_{MN}   +  \left( \mathbf{F}_N^H \otimes \mathbf{F}_M\right) \mathbf{D}_0 ^{-1} \mathbf{D}_0 ^{-H}\left( \mathbf{F}_N \otimes \mathbf{F}_M^H\right),
\end{align} 
 where the noise power is assumed to be normalized. 

Following  the same steps as in the proof of Proposition \ref{proposition1}, we learn that, by construction, $\left( \mathbf{F}_N^H \otimes \mathbf{F}_M\right) \mathbf{D}_0 ^{-1} \mathbf{D}_0 ^{-H}\left( \mathbf{F}_N \otimes \mathbf{F}_M^H\right)
$ is also a block-circulant matrix, which means that the   elements on the main diagonal of $\left( \mathbf{F}_N^H \otimes \mathbf{F}_M\right) \mathbf{D}_0 ^{-1} \mathbf{D}_0 ^{-H}\left( \mathbf{F}_N \otimes \mathbf{F}_M^H\right)
$ are identical. Without loss of generality, denote the diagonal elements of  $\left( \mathbf{F}_N^H \otimes \mathbf{F}_M\right) \mathbf{D}_0 ^{-1} \mathbf{D}_0 ^{-H}\left( \mathbf{F}_N \otimes \mathbf{F}_M^H\right)$  by $\phi$.  Therefore, $\phi$ can be found by using the trace of the matrix as follows: 
\begin{align}
\phi = &\frac{1}{NM}\text{Tr}\left\{\left( \mathbf{F}_N^H \otimes \mathbf{F}_M\right) \mathbf{D}_0 ^{-1} \mathbf{D}_0 ^{-H}\left( \mathbf{F}_N \otimes \mathbf{F}_M^H\right)\right\}\\\nonumber =&\frac{1}{NM}\text{Tr}\left\{\left( \mathbf{F}_N \otimes \mathbf{F}_M^H\right)\left( \mathbf{F}_N^H \otimes \mathbf{F}_M\right) \mathbf{D}_0 ^{-1} \mathbf{D}_0 ^{-H}\right\} \\\nonumber=&\frac{1}{NM}\text{Tr}\left\{  \mathbf{D}_0 ^{-1} \mathbf{D}_0 ^{-H}\right\} = \frac{1}{NM}\sum^{N-1}_{k=0}\sum^{M-1}_{l=0} |D_0^{k,l}|^{-2}.
\end{align}
Therefore,  the SINR for detection of  $x_0[k,l]$ is given by
\begin{align}
\text{SINR}_{0,kl}^{LE}  = \frac{\rho\gamma_0^2}{\rho \gamma^2_1 +\phi},
\end{align}
and the proof is complete.
\vspace{-1em}
\section{Proof for Lemma \ref{lemmax2}}\label{plemmax2}

The lemma is proved by first developing   upper and lower bounds on the outage probability, and then showing that both bounds  have the same diversity order. 

An upper bound on $\text{SINR}_{0,kl} $ is given by 
\begin{align}
\text{SINR}_{0,kl} =&  \frac{\rho\gamma_0^2}{\rho \gamma^2_1 + \frac{1}{NM}\sum^{N-1}_{\tilde{k}=0}\sum^{M-1}_{\tilde{l}=0} |D_0^{\tilde{k},\tilde{l}}|^{-2}}\  \leq \frac{\rho\gamma_0^2}{\rho \gamma^2_1 + \frac{1}{NM}  |D_0^{0,0}|^{-2}}.
\end{align}

Therefore, the outage probability, denoted by $\mathrm{P}_{0,kl}$,  can be lower bounded as follows:
\begin{align}\label{lowerboundx}
\mathrm{P}_{0,kl} \geq& \mathrm{P}\left(  \frac{\rho\gamma_0^2}{\rho \gamma^2_1 + \frac{1}{NM}  |D_0^{0,0}|^{-2}}<\epsilon_0\right)  
=\mathrm{P}\left(|D_0^{0,0}|^{2} <\frac{\epsilon_0}{NM\rho ( \gamma_0^2-\gamma_1^2\epsilon_0)} \right),
\end{align}
where we assume that $\gamma_0^2 > \gamma^2_1\epsilon_0$. Otherwise, the outage probability is always one.

To evaluate the lower bound on the outage probability, the distribution of  $D_{0}^{u,v}$ is required. Recall from \eqref{yx model6} that  $D_{0}^{u,v}$ is the   $((v-1)M+u)$-th diagonal element of $\mathbf{D}_0$ and can be expressed as follows:
\begin{align}\label{dmatrix}
D_{0}^{u,v}= \sum^{N-1}_{n=0}\sum^{M-1}_{m=0} a_{0,n}^{m,1} e^{j2\pi \frac{um}{M}} e^{-j2\pi \frac{vn}{N}},
\end{align}
which is   the ISFFT of $a_{0,n}^{m,1}$.
Therefore, we  have the following property:  
\begin{align}\label{dmatrix3}
\tilde{\mathbf{D}}_{0}= \sqrt{NM} \mathbf{F}_M ^H\mathbf{A}_{0}  \mathbf{F}_N,
\end{align}
where the element in the $u$-th row and the $v$-th column of $\tilde{\mathbf{D}}_{0}$ is $D_{0}^{u,v}$ and the element in the $m$-th row and the $n$-th column of $\mathbf{A}_{0} $ is   $a_{0,n}^{m,1}$. 

The matrix-based expression shown in \eqref{dmatrix3} can be vectorized as follows:
\begin{align} \label{channel rela}
\rm{Diag} (\mathbf{D}_0) =&  \text{vec} (\tilde{\mathbf{D}}_0) =\sqrt{NM}\text{vec} (\mathbf{F}_M^H \mathbf{A}_{0}  \mathbf{F}_N) 
 =\sqrt{NM} ( \mathbf{F}_N\otimes \mathbf{F}_M^H )\text{vec}(\mathbf{A}_{0} ),
\end{align}
where $\rm{Diag} (\mathbf{A})$ denotes a vector collecting all   elements on the main diagonal of $\mathbf{A}$ and we use the facts that $(\mathbf{C}^T\otimes \mathbf{A})\text{vec}(\mathbf{B})=\text{vec}(\mathbf{D})$ if $\mathbf{A}\mathbf{B}\mathbf{C}=\mathbf{D}$, and $\mathbf{F}_N^T=\mathbf{F}_N$. 

We note that $\text{vec}(\mathbf{A}_{0} )$   contains only $(P_0+1)$ non-zero elements, where the remaining elements are zero.  Therefore, each element on the main diagonal of $\mathbf{D}_0$ is a superposition  of $(P_0+1)$  i.i.d.   random variables, $h_{i,p}\sim CN\left(0,\frac{1}{P_0+1}\right)$. We further note that   the   coefficients for the superposition are complex exponential constants, i.e., the magnitude of each   coefficient is one.  Therefore,  each element on the main diagonal of $\mathbf{D}_0$ is still complex Gaussian distributed, i.e., $D_{0}^{u,v} \sim CN(0,1)$, which means that the lower bound on the outage probability shown in \eqref{lowerboundx} can be expressed  as follows:
\begin{align}
\mathrm{P}_{0,kl} \geq& 1 - e^{- \frac{\epsilon_0}{ NM\rho( \gamma_0^2-\gamma_1^2\epsilon_0) } } \doteq \frac{1}{\rho}.
\end{align}

On the other hand, an upper bound on the outage probability is given by
\begin{align}
\mathrm{P}_{0,kl} \leq& \mathrm{P}\left(  \frac{\rho\gamma_0^2}{\rho \gamma^2_1 + \frac{1}{NM}\sum^{N-1}_{\tilde{k}=0}\sum^{M-1}_{\tilde{l}=0} |D_0^{\text{min}}|^{-2}}<\epsilon_0\right), 
\end{align}
where $|D_0^{\text{min}}| = \min \{|D_0^{{k,l}}|, \forall l \in\{0, \cdots, M-1\}, k\in \{0, \cdots, N-1\} \}$. 

Therefore, the outage probability can be upper bounded as follows: 
\begin{align}
\mathrm{P}_{0,kl} \leq& \mathrm{P}\left(|D_0^{\text{min}}|^{2} <\frac{\epsilon_0}{\rho( \gamma_0^2-\gamma_1^2\epsilon_0)}  \right)  .
\end{align}
It is important to point out that  the $|D_0^{k,l}|^{2}$, $ l \in\{0, \cdots, M-1\}, k\in \{0, \cdots, N-1\} $, are identically but not independently distributed.   This correlation property is shown as follows. The covariance  matrix of the effective channel gains, i.e., the elements on the main diagonal of $\mathbf{D}_0$, is given by
\begin{align}
\mathcal{E}\left\{ \rm{Diag}(\mathbf{D}_0)  \rm{Diag}(\mathbf{D}_0) ^H\right\}  =& NM\mathcal{E}\left\{   ( \mathbf{F}_N\otimes \mathbf{F}_M^H )\text{vec}(\mathbf{A}_{0} ) \text{vec}(\mathbf{A}_{0} ) ^H( \mathbf{F}_N\otimes \mathbf{F}_M^H )^H\right\}
\\\nonumber=& NM( \mathbf{F}_N\otimes \mathbf{F}_M^H ) \mathcal{E}\left\{  \text{vec}(\mathbf{A}_{0} ) \text{vec}(\mathbf{A}_{0} ) ^H\right\}( \mathbf{F}_N\otimes \mathbf{F}_M^H )^H.
\end{align}
Because the channel gains, $h_{0,p}$,  are i.i.d., $ \mathcal{E}\left\{  \text{vec}(\mathbf{A}_{0} ) \text{vec}(\mathbf{A}_{0} ) ^H\right\}$ is a diagonal matrix, where only $(P_0+1)$ of its diagonal  elements are non-zero. Following the same steps as in the proof for Proposition \ref{proposition1},   one can show that the product of $ ( \mathbf{F}_N\otimes \mathbf{F}_M^H ) $, a diagonal matrix, and   $( \mathbf{F}_N\otimes \mathbf{F}_M^H )^H$ yields a block circulant matrix, which means that $\mathcal{E}\left\{ \rm{Diag}(\mathbf{D}_0)  \rm{Diag}(\mathbf{D}_0) ^H\right\}$ is a block-circulant matrix,   not a diagonal    matrix. Therefore, the $|D_0^{k,l}|^{2}$, $ l \in\{0, \cdots, M-1\}, k\in \{0, \cdots, N-1\} $ are correlated, instead of independent.  

Although the $|D_0^{k,l}|^{2}$ are not independent, an upper bound on $\mathrm{P}_{0,kl}$ can be still found  as follows:
\begin{align}
\mathrm{P}_{0,kl} \leq& \mathrm{P}\left(|D_0^{\text{min}}|^{2} <\frac{\epsilon_0}{\rho( \gamma_0^2-\gamma_1^2\epsilon_0)}  \right)   \leq  \sum^{N-1}_{k=0} \sum^{M-1}_{l=0} \mathrm{P}\left(|D_0^{k,l}|^{2} <\frac{\epsilon_0}{\rho( \gamma_0^2-\gamma_1^2\epsilon_0)}  \right) \\\nonumber
\leq& MN \mathrm{P}\left(|D_0^{0,0}|^{2} <\frac{\epsilon_0}{\rho( \gamma_0^2-\gamma_1^2\epsilon_0)}  \right)  =MN\left(1 - e^{-\frac{\epsilon_0}{\rho( \gamma_0^2-\gamma_1^2\epsilon_0)} } \right)\doteq \frac{1}{\rho}. 
\end{align}
Since both the upper and lower bounds on the outage probability have the same diversity order, the proof of the lemma is complete.

\vspace{-1em}
\linespread{1.2}
   \bibliographystyle{IEEEtran}
\bibliography{IEEEfull,trasfer}
   \end{document}